\documentclass{amsart}[a4paper]

\makeatletter
\let\mymakefnmark\@makefnmark
\let\mythefnmark\@thefnmark

\newcommand{\restorefn}{\let\@makefnmark\mymakefnmark
\let\mythfnmakr\@thefnmark}
\makeatother

\usepackage{etoolbox}

\usepackage{graphicx}
\usepackage{pdflscape}
\usepackage{graphicx}%
\usepackage{multirow}%
\usepackage{amsmath,amssymb,amsfonts}%
\usepackage[foot]{amsaddr}
\usepackage{mathrsfs}%
\usepackage[title]{appendix}%
\usepackage{xcolor}%
\usepackage{textcomp}%
\usepackage{manyfoot}%
\usepackage{booktabs}%
\usepackage{algorithm}%
\usepackage{algorithmicx}%
\usepackage{algpseudocode}%
\usepackage{listings}%
\usepackage{xspace}
\usepackage{hyperref}
\usepackage{subcaption}
\usepackage{url}
\usepackage[disable]{todonotes}
\usepackage{wrapfig,booktabs}

\allowdisplaybreaks

\def\BibTeX{{\rm B\kern-.05em{\sc i\kern-.025em b}\kern-.08em
    T\kern-.1667em\lower.7ex\hbox{E}\kern-.125emX}}

\newtheorem{theorem}{Theorem}
\newtheorem{proposition}[theorem]{Proposition}%
\newtheorem{theorem*}{Theorem}

\newcommand{\kgrodel}{$k$-\textsc{GRoDel}\xspace}
\newcommand{\fd}[3]{d^{f}_{#1}(#2,#3)}
\newcommand{\tool}[1]{\textsc{#1}}
\newcommand{\THR}{\ensuremath{R_{h}}\xspace}

\newcommand{\egc}{e.\,g.,\xspace}
\newcommand{\iec}{i.\,e.,\xspace}
\newcommand{\etal}{et al.\xspace}


\newcommand{\mat}[1]{\ensuremath{\mathbf{#1}}}   
\newcommand{\vect}[1]{\mathbf{#1}}   
\newcommand{\Lap}{\mat{L}}

\newcommand{\GreedyTHR}{\texttt{GreedyTHR}\xspace}
\newcommand{\GreedyFI}{\texttt{GreedyFI}\xspace}

\newcommand{\Lpinv}{\ensuremath{\mat{L}^\dagger}}

\newcommand{\effres}[3]{\mat{r}_{#1}(#2,#3)}
\newcommand{\fstar}{u^*}
\newcommand{\trace}[1]{\operatorname{tr}(#1)}
\newcommand{\easubeb}{(\vect{e}_a - \vect{e}_b)}
\newcommand{\norm}[1]{\left\lVert#1\right\rVert}
\newcommand{\uvec}[1]{\vect{e_{#1}}}
\newcommand{\loss}[2]{\text{loss}(#1,#2)}
\newcommand{\argmax}{\operatorname{argmax}}
\newcommand{\Oh}{\ensuremath{\mathcal{O}}}

\newcommand{\eqncontinue}{~~~~}
\newcommand{\eqnlinebreak}{\\&\eqncontinue }

\pagestyle{plain}

\usepackage{fancyhdr}

\raggedbottom

\renewcommand{\thefootnote}{\fnsymbol{footnote}}

\begin{document}

\bibliographystyle{splncs04}

\title{Introducing Total Harmonic Resistance for \\ Graph Robustness under Edge Deletions\restorefn\footnotemark}

\author{Lukas Berner}
\email{lukas.berner@hu-berlin.de}

\author{Henning Meyerhenke}
\email{meyerhenke@hu-berlin.de}
\address[Lukas Berner, Henning Meyerhenke]{Department of Computer Science, Humboldt-Universität zu Berlin, Unter den Linden 6, 10099 Berlin, Germany}

\begin{abstract}
Assessing and improving the robustness of a graph $G$ are critical steps in network design and analysis. 
To this end, we consider the optimisation problem of removing $k$ edges from $G$ such that the resulting graph has 
minimal robustness, simulating attacks or failures.

In this paper, we propose total harmonic resistance as a new robustness measure for this purpose
-- and compare it to the recently proposed forest index [Zhu et al., IEEE Trans.\ Inf.\ Forensics and Security, 2023]. 
Both measures are related to the established total effective resistance measure, but their advantage
is that they can handle disconnected graphs. This is also important for originally connected graphs
due to the removal of the $k$ edges. 
To compare our measure with the forest index, we first investigate exact solutions for small examples.
The best $k$ edges to select when optimizing for the forest index lie at the periphery. 
Our proposed measure, in turn, prioritizes more central edges, which should be beneficial for most applications. 
Furthermore, we adapt a generic greedy algorithm to our optimization problem with the total harmonic
resistance. With this algorithm, we perform a case study on the Berlin road network 
and also apply the algorithm to established benchmark graphs. 
The results are similar as for the small example graphs above and indicate the higher suitability of the new measure.

\textbf{Keywords:} Graph robustness optimization, infrastructure protection, total harmonic resistance, forest index, effective resistance
\end{abstract}

\maketitle

\stepcounter{footnote}\footnotetext{\scshape This preprint has
not undergone peer review (when applicable) or any post-submission improvements or
corrections. The Version of Record of this contribution will be published at ECMLPKDD 2024 soon.}
\setcounter{footnote}{0}

\renewcommand{\thefootnote}{\arabic{footnote}}

\section{Introduction}
\label{sec:intro}
The analysis of network\footnote{We use the terms \textit{network} and \textit{graph} interchangeably.}
topologies, a major subarea of data science on network data,
is key to understanding the functionality, dynamics, and
evolution of networks~\cite{barabasi,Newman2018networks}. An important property of a network in this context is its \emph{robustness},
\iec its ability to withstand failures of its components
(or the extent of this ability)~\cite{barabasi}.
As an example, a typical question is whether a network remains (mostly) connected
if a certain fraction of its vertices and/or edges are deleted~\cite[Ch.~15]{Newman2018networks}.
Despite the widespread use of vertex deletions, edge deletions can be more appropriate depending
on the modeled phenomenon. Examples include, among others, road blocks in street or public transportation
networks; pollution in water distribution networks; disruption
of gas pipelines, energy grids, or computer/telecommunication networks.
Such deletions may occur as a result of failure or of an attack; robustness thus is a
critical design issue that arises in many application areas~\cite{freitas2022graph}, \egc various
public infrastructures~\cite{oded,yakup,jose,water}.

Due to economic reasons, it is unrealistic that all network components can be protected
with the same effort. Thus, with the protection of critical infrastructure 
as application in mind, we consider the following optimization problem: given a graph $G = (V, E)$ and a
budget of $k$ graph edges to be removed, find the subset $S \subset E$ such that
the robustness of $G' = (V, E \setminus S)$ is minimized. This problem, which we call
\kgrodel (short for \emph{graph robustness problem with $k$ deletions}), models a concurrent
attack (or failure). The solution indicates which set of edges should be particularly safeguarded,
\egc segments in a water distribution network.
Clearly, for a particular application, one must also instantiate this generic problem with
a sensible notion of robustness.

Not surprisingly, numerous robustness measures have been proposed in the literature~\cite{barabasi,jose}.
For the related problem of optimizing the robustness by \emph{adding} $k$ edges (called $k$-GRIP in Ref.~\cite{PredariBKM23}, short for \emph{graph robustness improvement problem}),
\emph{total effective resistance} was established as a meaningful robustness measure in various
scenarios~\cite{Ellens2011,ps18,Wang2014ImprovingRO,DBLP:conf/asunam/PredariKM22}.
Effective resistance is a pairwise metric on the vertices of $G$;
intuitively, it becomes small if there are many short paths
between two vertices. Two disconnected vertices have infinite effective
resistance, though. When total graph resistance were used in \kgrodel, a trivial solution to maximize it 
would thus be to disconnect $G$.
Yet, from an application's point of view,
disconnecting a small part from the vast majority of the graph may be less problematic than 
a bottleneck (or a disconnection) between two large parts.
Liu \etal~\cite{liu24fast} handles this issue by demanding that $G$ is still connected after edge removal.
Given an infrastructure scenario, this is a rather unnatural assumption.
Zhu \etal~\cite{zhu2023measures} address the issue by proposing the forest index, $R_f(G)$, as robustness
measure. Instead of effective resistance, $R_f(G)$ sums up the closely related
forest distance~\cite{chebotarev2000forest} for all vertex pairs. 
Forest distance is derived from the number of certain rooted forests in $G$.
It yields finite distance values also for disconnected vertex pairs.

\paragraph{Contribution.}
We show in this paper that \kgrodel using the forest index favors peripheral edges in many networks.
We deem this behavior unintuitive and, most importantly, undesirable for most applications. That is
why we propose \emph{total harmonic resistance} $\THR(\cdot)$ instead. This measure adds up the 
reciprocal of effective resistance for all vertex pairs, leading to a zero contribution of disconnected 
pairs (details in Section~\ref{sec:prelim}).
The use of $\THR(\cdot)$ may seem like a straightforward extension to handle deletions given that
the use of reciprocals is known for the popular (harmonic) closeness centrality (based on the ordinary graph distance)
to handle disconnectedness~\cite{Newman2018networks}. Nonetheless, we are to our knowledge
the first to investigate this notion of robustness (also see Section~\ref{sec:rel-work}).

To substantiate the higher suitability of $\THR(\cdot)$ compared to the forest index in \kgrodel,
we first examine optimal solutions for small graphs with the two measures 
(Section~\ref{sec:exact-solutions}). They clearly show that $\THR(\cdot)$ favors more central edges
than the forest index and also finds balanced cuts in examples with suitable $k$.

Since exact solutions for either \kgrodel measure are expensive to compute, we adapt 
in Section~\ref{sec:heuristic} the general greedy algorithm used in several previous papers for related problems.
For $\THR(\cdot)$, our greedy algorithm differentiates between bridge edges and other edges.
When a bridge edge is removed, a simple update operation for the Laplacian pseudoinverse does not work.
For these cases, we thus provide specialized update functions.
For the forest index, we derive a connection to total effective resistance, which allows the re-use of optimized Laplacian pseudoinverse solvers instead of more general matrix inversion solvers.

Our experiments (Sec.~\ref{sec:experiments}) include a case study on the road network of (a part of)
Berlin, Germany, as well as numerous public benchmark graphs used before in related work. They show:
(i) visually, the case study results indicate that the greedy solution for $\THR(\cdot)$ prefers more central edges than the one with the forest index.
Maybe not surprisingly, one can find an even better solution regarding $\THR(\cdot)$ by choosing natural cut edges (river bridges in the road network) manually, which underlines the expressiveness of the new measure;
(ii) for the benchmark graphs, a ranking based on closeness centrality confirms that greedy solutions
of $\THR(\cdot)$ lead to more central edges than the forest index in most cases, too.

\section{Problem Statement and a new Robustness Measure}
\label{sec:prelim}
\subsection{Problem Statement and Notation}
The input to \kgrodel is an integer $k\in\mathbb{N}$ and a simple undirected graph $G=(V,E)$ 
with $|V| = n$, $|E|=m$. Given $S\subset E$, let $G'=(V, E\setminus S)$ be the graph with the edges from $S$ removed.
\kgrodel aims at finding $S$ with $|S|=k$ such that the robustness of $G'$ is minimized 
(\iec $\THR(G')$ is minimized or $R_f(G')$ is maximized, respectively).

We use well-known matrix representations of graphs.
$\Lap_G = \mat{D}-\mat{A} \in \mathbb{R}^{n\times n}$ is the Laplacian matrix of $G$, where $\mat{D}$ is the vertex degree matrix and $\mat{A}$ is the adjacency matrix.
$\Lap_G$ is symmetric (since $G$ is undirected) and has zero row and column sums ($\Lap\mat{1}={0} = \mat{1}^T\Lap$).
Since $\Lap_G$ is not invertible, the Moore-Penrose pseudoinverse \Lpinv\ is used instead (cf.~\cite{DBLP:journals/socnet/BozzoF13}).
When $G$ has multiple components, $\Lap_G$ is a (permuted) block diagonal matrix where each block corresponds to one component of $G$.

\subsection{Robustness Measures}

\paragraph*{Effective Resistance.}
Viewing the graph as an electrical circuit where each edge is a resistor, the effective resistance is the
potential difference between two nodes $u$ and $v$ when injecting [extracting] a unit current at $u$ [$v$]~\cite{Klein93}.
It can be computed via $\Lpinv$: $\effres{G}{u}{v} = \Lpinv_G[u,u] - 2 \Lpinv_G[u,v] + \Lpinv_G[v,v]$ for nodes in the same component of $G$. For disconnected pairs, the resistance is infinite.
As a robustness measure, one can take the sum over all pairwise effective resistances to compute the total effective resistance $R_r(G) = \sum_{u < v} \effres{G}{u}{v}$, which has previously been used as optimization target for $k$-GRIP~\cite{Wang2014ImprovingRO,ps18,DBLP:conf/asunam/PredariKM22} in graphs with only one component.
Combining both equations above results in a simple trace-based formula~\cite{DBLP:journals/socnet/BozzoF13}:
\begin{equation}\label{eqn:effective_resistance_trace}
    R_r(G) = n \cdot \trace{\Lpinv_G}.
\end{equation}

\paragraph*{Forest Index.}
To address the issue of disconnected graphs, other robustness measures are required.
The \emph{forest index}, based on the \emph{forest distance}~\cite{chebotarev2000forest} $\fd{G}{\cdot}{\cdot}$, was proposed by Zhu et al.~\cite{zhu2023measures}.
Similar to effective resistance, the forest distance is based on the \emph{forest matrix} $\Omega = (L + I)^{-1}$, with $\fd{G}{u}{v} = \Omega[u,u] - 2 \Omega[u,v] + \Omega[v,v]$.
The forest distance is closely related to effective resistance (for details see Section~\ref{sec:heuristic}), but yields finite values also for disconnected vertex pairs. Similar to total effective resistance, the forest index is the sum of the forest distance (instead of the effective resistance) of all ordered vertex pairs $(u,v)$:
\begin{equation}\label{eqn:total_forest_distance}
    R_f(G) := \sum_{u < v} \fd{G}{u}{v}.
\end{equation}
With an argument analogous to the one for total effective resistance, the forest index can be expressed using the trace as well:
\begin{equation}\label{eqn:forest_index_trace}
    R_f(G) = n \cdot \trace{\Omega} - n.
\end{equation}

\paragraph*{Total Harmonic Resistance (THR)}
We now propose a new measure to handle disconnected graphs, \emph{total harmonic resistance}. 
This measure is again based on the effective resistance; this time one sums up the reciprocal of all pairwise effective resistances -- therefore \emph{harmonic}:

\begin{equation}\label{eqn:total_harmonic_resistance}
    R_h(G) := \sum_{u < v} \frac{1}{\effres{G}{u}{v}}.
\end{equation}

For vertex pairs where the effective resistance is infinite (\iec vertices lie in different components), we define the reciprocal to be zero.
The reciprocity in this sum makes computations more difficult compared to the other two metrics.

\section{Related Work}
\label{sec:rel-work}
Due to its high relevance in numerous application areas as well as a rich assortment of research
questions, robustness in networks has been an active research area
for several decades~\cite{freitas2022graph}.
We thus point the interested reader to recent surveys for a broader overview~\cite{freitas2022graph,math9080895}.
Concerning robustness measures, the survey by Freitas \etal~\cite{freitas2022graph} categorizes them
into three classes: (i) based on structural (combinatorial) properties, (ii) spectral properties
of the adjacency matrix, and (iii) spectral properties of the Laplacian matrix.
Total effective resistance belongs to the third class as it can be computed by the sum of the
Laplacian (inverse) eigenvalues. Chan and Akoglu~\cite{chan2016optimizing} propose a budget-constrained
edge rewiring mechanism to address six different spectral measures -- a related optimization problem, 
yet different from \kgrodel.
Note that Oehlers and Fabian~\cite{math9080895} focus on communication
networks and use a more fine-grained categorization than Freitas \etal within their context.

Failures of components can result from various reasons, \egc from natural disasters, attacks, or wear.
The targeted attack models surveyed by Freitas \etal~\cite{freitas2022graph} refer to vertex
removals and are based on vertex degrees and centrality scores.
In general, vertex removals are the predominant failure model in the
literature; Newman~\cite[Ch.~15]{Newman2018networks} discusses percolation (removal of a fraction
of the nodes), for example. An important question in this context is after which fraction
the graph becomes disconnected or, more generally, when the giant component dissolves.
One can address this question analytically in generative models (\egc~\cite{Newman2018networks})
and/or empirically with real-world data (\egc~\cite{BEYGELZIMER2005593}).
As a prime example, an influential paper by Albert \etal~\cite{albert2000error} and follow-up work led to 
the popular belief that scale-free networks are ``robust-yet-fragile'', \iec robust against 
uniform vertex deletion and fragile against targeted attacks that remove high-degree vertices.
Recent work by Hasheminezhad and Brandes~\cite{DBLP:journals/ans/HasheminezhadB23} puts this view 
into a more nuanced perspective: robustness depends primarily on the graph's minimum degree, 
not a power-law degree distribution.

As mentioned in Sec.~\ref{sec:intro}, edge deletions are natural to model failures in numerous
applications. Liu \etal~\cite{liu24fast}, who study the problem of minimizing one node's information
centrality when removing $k$ edges, argue that edge deletions are less intrusive than vertex deletions
and that they provide a more fine-granular control of disruptions.
To measure how easy two vertices can reach each other via alternative paths, numerous works use effective
resistance~\cite{Ellens2011,ps18,liu24fast,Wang2014ImprovingRO,PredariBKM23}, whereas Zhu \etal~\cite{zhu2023measures} use forest distance~\cite{chebotarev2000forest} as summands of their
forest index, a metric related to effective resistance.
Both robustness measures express with lower values that more alternative pathways exist. A small total forest
distance (as well as a small total effective resistance) thus means that many
vertex pairs can reach each other via many alternative short paths.
Obviously, this is a desirable property for a robustness measure in a number of
applications, \egc when it comes to routing information or goods~\cite{math9080895}.
Forest distance has recently been used for forest closeness
centrality~\cite{JinBZ19forest,DBLP:conf/sdm/GrintenAPM21}. There and when used as part of the forest index,
it has the advantage (compared to the ordinary graph distance or effective resistance)
to be able to handle disconnected graphs without changes.

An exact solution of the \kgrodel problem with total harmonic resistance or a related measure 
is likely infeasible for instances of non-trivial size: (i) similar optimization problems have been shown to be 
$\mathcal{NP}$-hard~\cite{kooij2023minimizing,DBLP:conf/sdm/GrintenAPM21}, including the single-vertex
variant of Liu \etal~\cite{liu24fast} with information centrality, 
and (ii) mathematical programming, even when applied to related problems with simpler objective functions,
can usually solve instances with only hundreds or at most a few thousand vertices in reasonable time~\cite{DBLP:conf/alenex/AngrimanBDGGM21}.
Empirically, however, the related problem of adding $k$ edges to minimize
total effective resistance can be solved adequately (yet in general not optimally) with a standard
greedy algorithm~\cite{top15}.

we developed in our previous work~\cite{DBLP:conf/asunam/PredariKM22,PredariBKM23}

heuristics to accelerate the greedy algorithm for the $k$-GRIP problem 
(with usually tolerable losses in solution quality).
Even more closely related, Liu \etal~\cite{liu24fast} and Zhu \etal~\cite{zhu2023measures}
use greedy strategies to identify $k$ edges to delete from $G$ while optimizing for a robustness measure
(information centrality vs forest index).
We thus expect an adapted greedy algorithm to work similarly well for our variant of \kgrodel.

\section{Comparison of Exact Solutions}
\label{sec:exact-solutions}
To investigate the difference between forest index and total harmonic resistance as robustness measures for \kgrodel, we analyze exact solutions for a collection of small examples.
These examples consist of different graph classes: grid graphs and variants thereof, random graphs~\cite[Part~III]{Newman2018networks} generated using the Barab\'{a}si-Albert model (parameters: $k=3$, $n_{max}=18$), and random graphs generated with the Watts-Strogatz model (parameters: $n=16$, $\operatorname{deg}=3$, $p=0.7$).
For each example, we compute the exact solutions of the optimization problem for both robustness measures.
Results for the grid-like grahps are visualized in Figure~\ref{fig:gridSolutions}.
Due to symmetry in the grid-like graphs, there are often multiple optimal solutions; 
for simplicity, we only show one of these solutions.

\newlength{\staticSize}
\setlength{\staticSize}{2.55cm}
\begin{figure}[!hbt]
    \begin{center}
        \begin{tabular}{c|c}
            Forest Index & Total Harmonic Resistance \\[0.2cm]\hline\\[0.1cm]
            \resizebox{\staticSize}{!}{\begin{tikzpicture}[scale=1]
    \tikzstyle{node}=[draw,circle,inner sep=1pt, minimum width=10pt]
    \tikzstyle{active}=[draw,circle,inner sep=1pt,ultra thick,red]
    \tikzstyle{legend}=[minimum width=2.5cm,minimum height=1cm]

    \node[node] (v00) at (0,0) {};
    \node[node] (v01) at (0,1) {};
    \node[node] (v02) at (0,2) {};
    \node[node] (v10) at (1,0) {};
    \node[node] (v11) at (1,1) {};
    \node[node] (v12) at (1,2) {};
    \node[node] (v20) at (2,0) {};
    \node[node] (v21) at (2,1) {};
    \node[node] (v22) at (2,2) {};
    \node[node] (v30) at (3,0) {};
    \node[node] (v31) at (3,1) {};
    \node[node] (v32) at (3,2) {};
    \node[node] (v40) at (4,0) {};
    \node[node] (v41) at (4,1) {};
    \node[node] (v42) at (4,2) {};

    \draw[ultra thick, blue] (v00) -- (v01);
    \draw[ultra thick, blue] (v00) -- (v10);

    \draw[ultra thick, blue] (v01) -- (v02);
    \draw[ultra thick, blue] (v01) -- (v11);

    \draw[ultra thick, blue] (v02) -- (v12);

    \draw[thin, black] (v10) -- (v20);
    \draw[thin, black] (v10) -- (v11);

    \draw[thin, black] (v11) -- (v12);
    \draw[thin, black] (v11) -- (v21);
    
    \draw[thin, black] (v12) -- (v22);

    \draw[thin, black] (v20) -- (v21);
    \draw[thin, black] (v20) -- (v30);

    \draw[thin, black] (v21) -- (v22);
    \draw[thin, black] (v21) -- (v31);

    \draw[thin, black] (v22) -- (v32);

    \draw[thin, black] (v30) -- (v31);
    \draw[thin, black] (v30) -- (v40);

    \draw[thin, black] (v31) -- (v32);
    \draw[thin, black] (v31) -- (v41);

    \draw[thin, black] (v32) -- (v42);

    \draw[thin, black] (v40) -- (v41);

    \draw[thin, black] (v41) -- (v42);

\end{tikzpicture}} & \resizebox{\staticSize}{!}{\begin{tikzpicture}[scale=1]
    \tikzstyle{node}=[draw,circle,inner sep=1pt, minimum width=10pt]
    \tikzstyle{active}=[draw,circle,inner sep=1pt,ultra thick,red]
    \tikzstyle{legend}=[minimum width=2.5cm,minimum height=1cm]

    \node[node] (v00) at (0,0) {};
    \node[node] (v01) at (0,1) {};
    \node[node] (v02) at (0,2) {};
    \node[node] (v10) at (1,0) {};
    \node[node] (v11) at (1,1) {};
    \node[node] (v12) at (1,2) {};
    \node[node] (v20) at (2,0) {};
    \node[node] (v21) at (2,1) {};
    \node[node] (v22) at (2,2) {};
    \node[node] (v30) at (3,0) {};
    \node[node] (v31) at (3,1) {};
    \node[node] (v32) at (3,2) {};
    \node[node] (v40) at (4,0) {};
    \node[node] (v41) at (4,1) {};
    \node[node] (v42) at (4,2) {};

    \draw[thin, black] (v00) -- (v01);
    \draw[thin, black] (v00) -- (v10);

    \draw[thin, black] (v01) -- (v02);
    \draw[thin, black] (v01) -- (v11);

    \draw[thin, black] (v02) -- (v12);

    \draw[ultra thick, blue] (v10) -- (v20);
    \draw[thin, black] (v10) -- (v11);

    \draw[thin, black] (v11) -- (v12);
    \draw[ultra thick, blue] (v11) -- (v21);
    
    \draw[ultra thick, blue] (v12) -- (v22);

    \draw[thin, black] (v20) -- (v21);
    \draw[thin, black] (v20) -- (v30);

    \draw[thin, black] (v21) -- (v22);
    \draw[thin, black] (v21) -- (v31);

    \draw[thin, black] (v22) -- (v32);

    \draw[thin, black] (v30) -- (v31);
    \draw[thin, black] (v30) -- (v40);

    \draw[thin, black] (v31) -- (v32);
    \draw[ultra thick, blue] (v31) -- (v41);

    \draw[ultra thick, blue] (v32) -- (v42);

    \draw[thin, black] (v40) -- (v41);

    \draw[thin, black] (v41) -- (v42);

\end{tikzpicture}} \\[0.3cm]
            \resizebox{1.46\staticSize}{!}{\begin{tikzpicture}[scale=1]
    \tikzstyle{node}=[draw,circle,inner sep=1pt, minimum width=10pt]
    \tikzstyle{active}=[draw,circle,inner sep=1pt,ultra thick,red]
    \tikzstyle{legend}=[minimum width=2.5cm,minimum height=1cm]

    \node[node] (v00) at (0,0) {};
    \node[node] (v01) at (0,1) {};
    \node[node] (v02) at (0,2) {};
    \node[node] (v03) at (0,3) {};
    \node[node] (v10) at (1,0) {};
    \node[node] (v11) at (1,1) {};
    \node[node] (v12) at (1,2) {};
    \node[node] (v13) at (1,3) {};
    \node[node] (v20) at (2,0) {};
    \node[node] (v21) at (2,1) {};
    \node[node] (v22) at (2,2) {};
    \node[node] (v23) at (2,3) {};
    \node[node] (v30) at (3,0) {};
    \node[node] (v31) at (3,1) {};
    \node[node] (v32) at (3,2) {};
    \node[node] (v33) at (3,3) {};
    \node[node] (v40) at (4,0) {};
    \node[node] (v41) at (4,1) {};
    \node[node] (v42) at (4,2) {};
    \node[node] (v43) at (4,3) {};
    \node[node] (v50) at (5,0) {};
    \node[node] (v51) at (5,1) {};
    \node[node] (v52) at (5,2) {};
    \node[node] (v53) at (5,3) {};
    \node[node] (v60) at (6,0) {};
    \node[node] (v61) at (6,1) {};
    \node[node] (v62) at (6,2) {};
    \node[node] (v63) at (6,3) {};

    \draw[ultra thick, blue] (v00) -- (v01);
    \draw[ultra thick, blue] (v00) -- (v10);

    \draw[ultra thick, blue] (v01) -- (v02);
    \draw[thin, black] (v01) -- (v11);

    \draw[thin, black] (v02) -- (v12);
    \draw[ultra thick, blue] (v02) -- (v03);

    \draw[ultra thick, blue] (v03) -- (v13);

    \draw[thin, black] (v10) -- (v20);
    \draw[thin, black] (v10) -- (v11);

    \draw[thin, black] (v11) -- (v12);
    \draw[thin, black] (v11) -- (v21);
    
    \draw[thin, black] (v12) -- (v22);
    \draw[thin, black] (v12) -- (v13);

    \draw[thin, black] (v13) -- (v23);

    \draw[thin, black] (v20) -- (v30);
    \draw[thin, black] (v20) -- (v21);

    \draw[thin, black] (v21) -- (v22);
    \draw[thin, black] (v21) -- (v31);
    
    \draw[thin, black] (v22) -- (v32);
    \draw[thin, black] (v22) -- (v23);

    \draw[thin, black] (v23) -- (v33);

    \draw[thin, black] (v30) -- (v40);
    \draw[thin, black] (v30) -- (v31);

    \draw[thin, black] (v31) -- (v32);
    \draw[thin, black] (v31) -- (v41);
    
    \draw[thin, black] (v32) -- (v42);
    \draw[thin, black] (v32) -- (v33);

    \draw[thin, black] (v33) -- (v43);

    \draw[thin, black] (v40) -- (v41);
    \draw[thin, black] (v40) -- (v50);

    \draw[thin, black] (v41) -- (v42);
    \draw[thin, black] (v41) -- (v51);

    \draw[thin, black] (v42) -- (v52);
    \draw[thin, black] (v42) -- (v43);

    \draw[thin, black] (v43) -- (v53);

    \draw[thin, black] (v50) -- (v51);
    \draw[thin, black] (v50) -- (v60);

    \draw[thin, black] (v51) -- (v52);
    \draw[thin, black] (v51) -- (v61);

    \draw[thin, black] (v52) -- (v62);
    \draw[thin, black] (v52) -- (v53);

    \draw[thin, black] (v53) -- (v63);

    \draw[thin, black] (v60) -- (v61);

    \draw[thin, black] (v61) -- (v62);
    \draw[thin, black] (v62) -- (v63);

\end{tikzpicture}} & \resizebox{1.46\staticSize}{!}{\begin{tikzpicture}[scale=1]
    \tikzstyle{node}=[draw,circle,inner sep=1pt, minimum width=10pt]
    \tikzstyle{active}=[draw,circle,inner sep=1pt,ultra thick,red]
    \tikzstyle{legend}=[minimum width=2.5cm,minimum height=1cm]

    \node[node] (v00) at (0,0) {};
    \node[node] (v01) at (0,1) {};
    \node[node] (v02) at (0,2) {};
    \node[node] (v03) at (0,3) {};
    \node[node] (v10) at (1,0) {};
    \node[node] (v11) at (1,1) {};
    \node[node] (v12) at (1,2) {};
    \node[node] (v13) at (1,3) {};
    \node[node] (v20) at (2,0) {};
    \node[node] (v21) at (2,1) {};
    \node[node] (v22) at (2,2) {};
    \node[node] (v23) at (2,3) {};
    \node[node] (v30) at (3,0) {};
    \node[node] (v31) at (3,1) {};
    \node[node] (v32) at (3,2) {};
    \node[node] (v33) at (3,3) {};
    \node[node] (v40) at (4,0) {};
    \node[node] (v41) at (4,1) {};
    \node[node] (v42) at (4,2) {};
    \node[node] (v43) at (4,3) {};
    \node[node] (v50) at (5,0) {};
    \node[node] (v51) at (5,1) {};
    \node[node] (v52) at (5,2) {};
    \node[node] (v53) at (5,3) {};
    \node[node] (v60) at (6,0) {};
    \node[node] (v61) at (6,1) {};
    \node[node] (v62) at (6,2) {};
    \node[node] (v63) at (6,3) {};

    \draw[thin, black] (v00) -- (v01);
    \draw[thin, black] (v00) -- (v10);

    \draw[thin, black] (v01) -- (v02);
    \draw[thin, black] (v01) -- (v11);

    \draw[thin, black] (v02) -- (v12);
    \draw[thin, black] (v02) -- (v03);

    \draw[thin, black] (v03) -- (v13);

    \draw[thin, black] (v10) -- (v20);
    \draw[thin, black] (v10) -- (v11);

    \draw[thin, black] (v11) -- (v12);
    \draw[thin, black] (v11) -- (v21);
    
    \draw[thin, black] (v12) -- (v22);
    \draw[thin, black] (v12) -- (v13);

    \draw[thin, black] (v13) -- (v23);

    \draw[thin, black] (v20) -- (v30);
    \draw[thin, black] (v20) -- (v21);

    \draw[thin, black] (v21) -- (v22);
    \draw[thin, black] (v21) -- (v31);
    
    \draw[ultra thick, blue] (v22) -- (v32);
    \draw[thin, black] (v22) -- (v23);

    \draw[ultra thick, blue] (v23) -- (v33);

    \draw[ultra thick, blue] (v30) -- (v40);
    \draw[thin, black] (v30) -- (v31);

    \draw[ultra thick, blue] (v31) -- (v32);
    \draw[ultra thick, blue] (v31) -- (v41);
    
    \draw[thin, black] (v32) -- (v42);
    \draw[thin, black] (v32) -- (v33);

    \draw[thin, black] (v33) -- (v43);

    \draw[thin, black] (v40) -- (v41);
    \draw[thin, black] (v40) -- (v50);

    \draw[thin, black] (v41) -- (v42);
    \draw[thin, black] (v41) -- (v51);

    \draw[thin, black] (v42) -- (v52);
    \draw[thin, black] (v42) -- (v43);

    \draw[thin, black] (v43) -- (v53);

    \draw[thin, black] (v50) -- (v51);
    \draw[thin, black] (v50) -- (v60);

    \draw[thin, black] (v51) -- (v52);
    \draw[thin, black] (v51) -- (v61);

    \draw[thin, black] (v52) -- (v62);
    \draw[thin, black] (v52) -- (v53);

    \draw[thin, black] (v53) -- (v63);

    \draw[thin, black] (v60) -- (v61);

    \draw[thin, black] (v61) -- (v62);
    \draw[thin, black] (v62) -- (v63);

\end{tikzpicture}}       \\[0.3cm]
            \resizebox{1.24\staticSize}{!}{\begin{tikzpicture}[scale=1]
    \tikzstyle{node}=[draw,circle,inner sep=1pt, minimum width=10pt]
    \tikzstyle{active}=[draw,circle,inner sep=1pt,ultra thick,red]
    \tikzstyle{legend}=[minimum width=2.5cm,minimum height=1cm]

    \node[node] (v00) at (0,0) {};
    \node[node] (v01) at (0,1) {};
    \node[node] (v02) at (0,2) {};
    \node[node] (v03) at (0,3) {};
    \node[node] (v04) at (0,4) {};
    \node[node] (v10) at (1,0) {};
    \node[node] (v11) at (1,1) {};
    \node[node] (v12) at (1,2) {};
    \node[node] (v13) at (1,3) {};
    \node[node] (v14) at (1,4) {};
    \node[node] (v20) at (2,0) {};
    \node[node] (v21) at (2,1) {};
    \node[node] (v22) at (2,2) {};
    \node[node] (v23) at (2,3) {};
    \node[node] (v24) at (2,4) {};
    \node[node] (v30) at (3,0) {};
    \node[node] (v31) at (3,1) {};
    \node[node] (v32) at (3,2) {};
    \node[node] (v33) at (3,3) {};
    \node[node] (v34) at (3,4) {};
    \node[node] (v40) at (4,0) {};
    \node[node] (v41) at (4,1) {};
    \node[node] (v42) at (4,2) {};
    \node[node] (v43) at (4,3) {};
    \node[node] (v44) at (4,4) {};
    \node[node] (v50) at (5,0) {};
    \node[node] (v51) at (5,1) {};
    \node[node] (v52) at (5,2) {};
    \node[node] (v53) at (5,3) {};
    \node[node] (v54) at (5,4) {};

    \draw[ultra thick, blue] (v00) -- (v01);
    \draw[ultra thick, blue] (v00) -- (v10);

    \draw[thin, black] (v01) -- (v02);
    \draw[thin, black] (v01) -- (v11);

    \draw[thin, black] (v02) -- (v12);
    \draw[ultra thick, blue] (v02) -- (v03);

    \draw[thin, black] (v03) -- (v13);
    \draw[ultra thick, blue] (v03) -- (v04);

    \draw[ultra thick, blue] (v04) -- (v14);

    \draw[thin, black] (v10) -- (v20);
    \draw[thin, black] (v10) -- (v11);

    \draw[thin, black] (v11) -- (v12);
    \draw[thin, black] (v11) -- (v21);
    
    \draw[thin, black] (v12) -- (v22);
    \draw[thin, black] (v12) -- (v13);

    \draw[thin, black] (v13) -- (v23);
    \draw[thin, black] (v13) -- (v14);
    
    \draw[thin, black] (v14) -- (v24);

    \draw[thin, black] (v20) -- (v30);
    \draw[thin, black] (v20) -- (v21);

    \draw[thin, black] (v21) -- (v22);
    \draw[thin, black] (v21) -- (v31);
    
    \draw[thin, black] (v22) -- (v32);
    \draw[thin, black] (v22) -- (v23);

    \draw[thin, black] (v23) -- (v33);
    \draw[thin, black] (v23) -- (v24);
    
    \draw[thin, black] (v24) -- (v34);

    \draw[thin, black] (v30) -- (v40);
    \draw[thin, black] (v30) -- (v31);

    \draw[thin, black] (v31) -- (v32);
    \draw[thin, black] (v31) -- (v41);
    
    \draw[thin, black] (v32) -- (v42);
    \draw[thin, black] (v32) -- (v33);

    \draw[thin, black] (v33) -- (v43);
    \draw[thin, black] (v33) -- (v34);
    
    \draw[thin, black] (v34) -- (v44);

    \draw[thin, black] (v40) -- (v41);
    \draw[thin, black] (v40) -- (v50);

    \draw[thin, black] (v41) -- (v42);
    \draw[thin, black] (v41) -- (v51);

    \draw[thin, black] (v42) -- (v52);
    \draw[thin, black] (v42) -- (v43);

    \draw[thin, black] (v43) -- (v53);
    \draw[thin, black] (v43) -- (v44);

    \draw[thin, black] (v44) -- (v54);

    \draw[thin, black] (v50) -- (v51);

    \draw[thin, black] (v51) -- (v52);
    \draw[thin, black] (v52) -- (v53);
    \draw[thin, black] (v53) -- (v54);

\end{tikzpicture}} & \resizebox{1.24\staticSize}{!}{\begin{tikzpicture}[scale=1]
    \tikzstyle{node}=[draw,circle,inner sep=1pt, minimum width=10pt]
    \tikzstyle{active}=[draw,circle,inner sep=1pt,ultra thick,red]
    \tikzstyle{legend}=[minimum width=2.5cm,minimum height=1cm]

    \node[node] (v00) at (0,0) {};
    \node[node] (v01) at (0,1) {};
    \node[node] (v02) at (0,2) {};
    \node[node] (v03) at (0,3) {};
    \node[node] (v04) at (0,4) {};
    \node[node] (v10) at (1,0) {};
    \node[node] (v11) at (1,1) {};
    \node[node] (v12) at (1,2) {};
    \node[node] (v13) at (1,3) {};
    \node[node] (v14) at (1,4) {};
    \node[node] (v20) at (2,0) {};
    \node[node] (v21) at (2,1) {};
    \node[node] (v22) at (2,2) {};
    \node[node] (v23) at (2,3) {};
    \node[node] (v24) at (2,4) {};
    \node[node] (v30) at (3,0) {};
    \node[node] (v31) at (3,1) {};
    \node[node] (v32) at (3,2) {};
    \node[node] (v33) at (3,3) {};
    \node[node] (v34) at (3,4) {};
    \node[node] (v40) at (4,0) {};
    \node[node] (v41) at (4,1) {};
    \node[node] (v42) at (4,2) {};
    \node[node] (v43) at (4,3) {};
    \node[node] (v44) at (4,4) {};
    \node[node] (v50) at (5,0) {};
    \node[node] (v51) at (5,1) {};
    \node[node] (v52) at (5,2) {};
    \node[node] (v53) at (5,3) {};
    \node[node] (v54) at (5,4) {};

    \draw[thin, black] (v00) -- (v01);
    \draw[thin, black] (v00) -- (v10);

    \draw[thin, black] (v01) -- (v02);
    \draw[thin, black] (v01) -- (v11);

    \draw[thin, black] (v02) -- (v12);
    \draw[thin, black] (v02) -- (v03);

    \draw[thin, black] (v03) -- (v13);
    \draw[thin, black] (v03) -- (v04);

    \draw[thin, black] (v04) -- (v14);

    \draw[thin, black] (v10) -- (v20);
    \draw[thin, black] (v10) -- (v11);

    \draw[thin, black] (v11) -- (v12);
    \draw[thin, black] (v11) -- (v21);
    
    \draw[thin, black] (v12) -- (v22);
    \draw[thin, black] (v12) -- (v13);

    \draw[thin, black] (v13) -- (v23);
    \draw[thin, black] (v13) -- (v14);
    
    \draw[thin, black] (v14) -- (v24);

    \draw[ultra thick, blue] (v20) -- (v30);
    \draw[thin, black] (v20) -- (v21);

    \draw[thin, black] (v21) -- (v22);
    \draw[ultra thick, blue] (v21) -- (v31);
    
    \draw[ultra thick, blue] (v22) -- (v32);
    \draw[thin, black] (v22) -- (v23);

    \draw[ultra thick, blue] (v23) -- (v33);
    \draw[thin, black] (v23) -- (v24);
    
    \draw[ultra thick, blue] (v24) -- (v34);

    \draw[thin, black] (v30) -- (v40);
    \draw[thin, black] (v30) -- (v31);

    \draw[thin, black] (v31) -- (v32);
    \draw[thin, black] (v31) -- (v41);
    
    \draw[thin, black] (v32) -- (v42);
    \draw[thin, black] (v32) -- (v33);

    \draw[thin, black] (v33) -- (v43);
    \draw[thin, black] (v33) -- (v34);
    
    \draw[thin, black] (v34) -- (v44);

    \draw[thin, black] (v40) -- (v41);
    \draw[thin, black] (v40) -- (v50);

    \draw[thin, black] (v41) -- (v42);
    \draw[thin, black] (v41) -- (v51);

    \draw[thin, black] (v42) -- (v52);
    \draw[thin, black] (v42) -- (v43);

    \draw[thin, black] (v43) -- (v53);
    \draw[thin, black] (v43) -- (v44);

    \draw[thin, black] (v44) -- (v54);

    \draw[thin, black] (v50) -- (v51);

    \draw[thin, black] (v51) -- (v52);
    \draw[thin, black] (v52) -- (v53);
    \draw[thin, black] (v53) -- (v54);

\end{tikzpicture}} \\[0.3cm]
            \resizebox{1.68\staticSize}{!}{\begin{tikzpicture}[scale=1]
    \tikzstyle{node}=[draw,circle,inner sep=1pt, minimum width=10pt]
    \tikzstyle{active}=[draw,circle,inner sep=1pt,ultra thick,red]
    \tikzstyle{legend}=[minimum width=2.5cm,minimum height=1cm]

    \node[node] (vleft) at (-1,2) {};
    \node[node] (v00) at (0,0) {};
    \node[node] (v01) at (0,1) {};
    \node[node] (v02) at (0,2) {};
    \node[node] (v03) at (0,3) {};
    \node[node] (v04) at (0,4) {};
    \node[node] (v10) at (1,0) {};
    \node[node] (v11) at (1,1) {};
    \node[node] (v12) at (1,2) {};
    \node[node] (v13) at (1,3) {};
    \node[node] (v14) at (1,4) {};
    \node[node] (v20) at (2,0) {};
    \node[node] (v21) at (2,1) {};
    \node[node] (v22) at (2,2) {};
    \node[node] (v23) at (2,3) {};
    \node[node] (v24) at (2,4) {};
    \node[node] (v30) at (3,0) {};
    \node[node] (v31) at (3,1) {};
    \node[node] (v32) at (3,2) {};
    \node[node] (v33) at (3,3) {};
    \node[node] (v34) at (3,4) {};
    \node[node] (v40) at (4,0) {};
    \node[node] (v41) at (4,1) {};
    \node[node] (v42) at (4,2) {};
    \node[node] (v43) at (4,3) {};
    \node[node] (v44) at (4,4) {};
    \node[node] (v50) at (5,0) {};
    \node[node] (v51) at (5,1) {};
    \node[node] (v52) at (5,2) {};
    \node[node] (v53) at (5,3) {};
    \node[node] (v54) at (5,4) {};
    \node[node] (vright) at (6,2) {};

    \draw[ultra thick, blue] (v02) -- (vleft);

    \draw[ultra thick, blue] (v00) -- (v01);
    \draw[ultra thick, blue] (v00) -- (v10);

    \draw[thin, black] (v01) -- (v02);
    \draw[thin, black] (v01) -- (v11);

    \draw[thin, black] (v02) -- (v12);
    \draw[thin, black] (v02) -- (v03);

    \draw[thin, black] (v03) -- (v13);
    \draw[ultra thick, blue] (v03) -- (v04);

    \draw[ultra thick, blue] (v04) -- (v14);

    \draw[thin, black] (v10) -- (v20);
    \draw[thin, black] (v10) -- (v11);

    \draw[thin, black] (v11) -- (v12);
    \draw[thin, black] (v11) -- (v21);
    
    \draw[thin, black] (v12) -- (v22);
    \draw[thin, black] (v12) -- (v13);

    \draw[thin, black] (v13) -- (v23);
    \draw[thin, black] (v13) -- (v14);
    
    \draw[thin, black] (v14) -- (v24);

    \draw[thin, black] (v20) -- (v30);
    \draw[thin, black] (v20) -- (v21);

    \draw[thin, black] (v21) -- (v22);
    \draw[thin, black] (v21) -- (v31);
    
    \draw[thin, black] (v22) -- (v32);
    \draw[thin, black] (v22) -- (v23);

    \draw[thin, black] (v23) -- (v33);
    \draw[thin, black] (v23) -- (v24);
    
    \draw[thin, black] (v24) -- (v34);

    \draw[thin, black] (v30) -- (v40);
    \draw[thin, black] (v30) -- (v31);

    \draw[thin, black] (v31) -- (v32);
    \draw[thin, black] (v31) -- (v41);
    
    \draw[thin, black] (v32) -- (v42);
    \draw[thin, black] (v32) -- (v33);

    \draw[thin, black] (v33) -- (v43);
    \draw[thin, black] (v33) -- (v34);
    
    \draw[thin, black] (v34) -- (v44);

    \draw[thin, black] (v40) -- (v41);
    \draw[thin, black] (v40) -- (v50);

    \draw[thin, black] (v41) -- (v42);
    \draw[thin, black] (v41) -- (v51);

    \draw[thin, black] (v42) -- (v52);
    \draw[thin, black] (v42) -- (v43);

    \draw[thin, black] (v43) -- (v53);
    \draw[thin, black] (v43) -- (v44);

    \draw[thin, black] (v44) -- (v54);

    \draw[thin, black] (v50) -- (v51);

    \draw[thin, black] (v51) -- (v52);
    \draw[thin, black] (v52) -- (v53);
    \draw[thin, black] (v53) -- (v54);

    \draw[thin, black] (v52) -- (vright);

\end{tikzpicture}} & \resizebox{1.68\staticSize}{!}{\begin{tikzpicture}[scale=1]
    \tikzstyle{node}=[draw,circle,inner sep=1pt, minimum width=10pt]
    \tikzstyle{active}=[draw,circle,inner sep=1pt,ultra thick,red]
    \tikzstyle{legend}=[minimum width=2.5cm,minimum height=1cm]

    \node[node] (vleft) at (-1,2) {};
    \node[node] (v00) at (0,0) {};
    \node[node] (v01) at (0,1) {};
    \node[node] (v02) at (0,2) {};
    \node[node] (v03) at (0,3) {};
    \node[node] (v04) at (0,4) {};
    \node[node] (v10) at (1,0) {};
    \node[node] (v11) at (1,1) {};
    \node[node] (v12) at (1,2) {};
    \node[node] (v13) at (1,3) {};
    \node[node] (v14) at (1,4) {};
    \node[node] (v20) at (2,0) {};
    \node[node] (v21) at (2,1) {};
    \node[node] (v22) at (2,2) {};
    \node[node] (v23) at (2,3) {};
    \node[node] (v24) at (2,4) {};
    \node[node] (v30) at (3,0) {};
    \node[node] (v31) at (3,1) {};
    \node[node] (v32) at (3,2) {};
    \node[node] (v33) at (3,3) {};
    \node[node] (v34) at (3,4) {};
    \node[node] (v40) at (4,0) {};
    \node[node] (v41) at (4,1) {};
    \node[node] (v42) at (4,2) {};
    \node[node] (v43) at (4,3) {};
    \node[node] (v44) at (4,4) {};
    \node[node] (v50) at (5,0) {};
    \node[node] (v51) at (5,1) {};
    \node[node] (v52) at (5,2) {};
    \node[node] (v53) at (5,3) {};
    \node[node] (v54) at (5,4) {};
    \node[node] (vright) at (6,2) {};

    \draw[thin, black] (v02) -- (vleft);

    \draw[thin, black] (v00) -- (v01);
    \draw[thin, black] (v00) -- (v10);

    \draw[thin, black] (v01) -- (v02);
    \draw[thin, black] (v01) -- (v11);

    \draw[thin, black] (v02) -- (v12);
    \draw[thin, black] (v02) -- (v03);

    \draw[thin, black] (v03) -- (v13);
    \draw[thin, black] (v03) -- (v04);

    \draw[thin, black] (v04) -- (v14);

    \draw[thin, black] (v10) -- (v20);
    \draw[thin, black] (v10) -- (v11);

    \draw[thin, black] (v11) -- (v12);
    \draw[thin, black] (v11) -- (v21);
    
    \draw[thin, black] (v12) -- (v22);
    \draw[thin, black] (v12) -- (v13);

    \draw[thin, black] (v13) -- (v23);
    \draw[thin, black] (v13) -- (v14);
    
    \draw[thin, black] (v14) -- (v24);

    \draw[ultra thick, blue] (v20) -- (v30);
    \draw[thin, black] (v20) -- (v21);

    \draw[thin, black] (v21) -- (v22);
    \draw[ultra thick, blue] (v21) -- (v31);
    
    \draw[ultra thick, blue] (v22) -- (v32);
    \draw[thin, black] (v22) -- (v23);

    \draw[ultra thick, blue] (v23) -- (v33);
    \draw[thin, black] (v23) -- (v24);
    
    \draw[ultra thick, blue] (v24) -- (v34);

    \draw[thin, black] (v30) -- (v40);
    \draw[thin, black] (v30) -- (v31);

    \draw[thin, black] (v31) -- (v32);
    \draw[thin, black] (v31) -- (v41);
    
    \draw[thin, black] (v32) -- (v42);
    \draw[thin, black] (v32) -- (v33);

    \draw[thin, black] (v33) -- (v43);
    \draw[thin, black] (v33) -- (v34);
    
    \draw[thin, black] (v34) -- (v44);

    \draw[thin, black] (v40) -- (v41);
    \draw[thin, black] (v40) -- (v50);

    \draw[thin, black] (v41) -- (v42);
    \draw[thin, black] (v41) -- (v51);

    \draw[thin, black] (v42) -- (v52);
    \draw[thin, black] (v42) -- (v43);

    \draw[thin, black] (v43) -- (v53);
    \draw[thin, black] (v43) -- (v44);

    \draw[thin, black] (v44) -- (v54);

    \draw[thin, black] (v50) -- (v51);

    \draw[thin, black] (v51) -- (v52);
    \draw[thin, black] (v52) -- (v53);
    \draw[thin, black] (v53) -- (v54);

    \draw[thin, black] (v52) -- (vright);

\end{tikzpicture}} \\
        \end{tabular}
        \caption{Optimal solutions for $k=5$ on grid-like graphs using FI (left column) and THR (right column) as resistance measures. Edges highlighted in blue belong to the solution set.}
        \label{fig:gridSolutions}
    \end{center}
\end{figure}
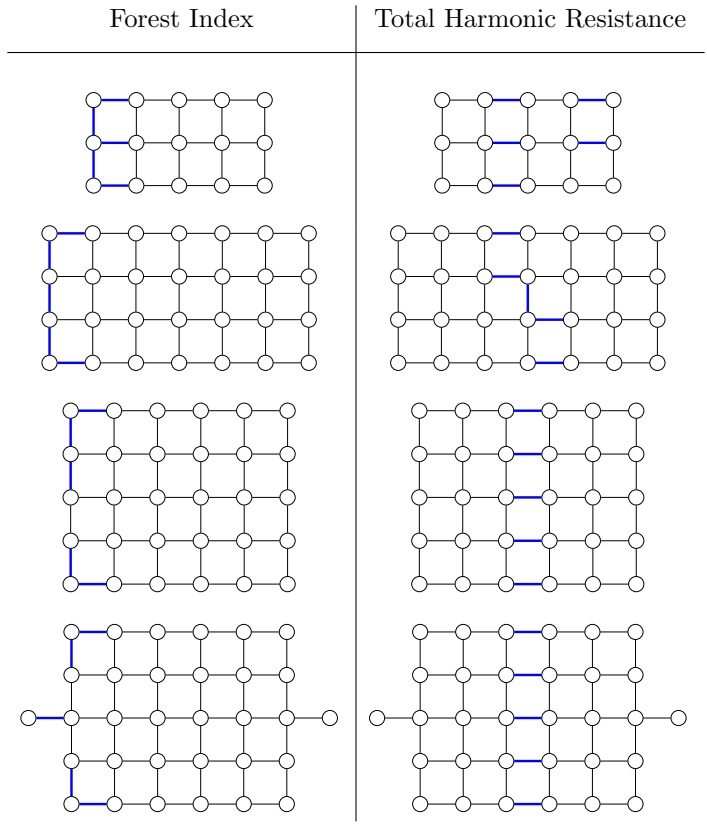

Visually, the figures suggest that the forest index FI finds edges in the periphery,
while THR finds central edges.
THR also seems to be more robust regarding changes to low-degree nodes in the periphery of a graph:
THR finds the same solution for the 4x7 grid and for the hotdog grid, while the FI solution changes.

To further support our claim about periphery vs center, we compute a centrality score 
for an edge set as follows:
Given the closeness centrality $c(\cdot)$
of all nodes in $G = (V,E)$ and a set of edges $S\subset E$, we rank the nodes by their closeness centrality and convert their rank into a relative (quantile) score $s\in[0,1]$, where the most central node has score 1 and the least central node has score 0.
Then, for each edge $e = (u,v)\in S$, we take the mean score of both incident nodes and call this the score of that edge $s(e)$.
Edges which are central in the graph have a larger score than less central edges.
Finally, we define the score of $S$ as the mean of all edge scores in the set.
\begin{table}
    \caption{Solution set centrality scores as defined in Section~\ref{sec:exact-solutions}. Since multiple optimal solutions exist for some graphs, the score is computed for each solution and aggregated in this table.
}
    \label{tab:smallGraphsSolutionScores} 
    \centering




    \begin{tabular}{lrlrlrlrlrl}
        \toprule
        graph & \multicolumn{2}{c}{BA1} & \multicolumn{2}{c}{BA2} & \multicolumn{2}{c}{BA3} & \multicolumn{2}{c}{grid5x3} & \multicolumn{2}{c}{grid5x6} \\
        opt & FI & THR & FI & THR & FI & THR & FI & THR & FI & THR \\\midrule
        min & 0.31 & 0.40 & 0.29 & 0.47 & 0.32 & 0.40 & 0.24 & 0.53 & 0.09 & 0.69 \\
        mean & 0.33 & 0.40 & 0.29 & 0.47 & 0.34 & 0.40 & 0.24 & 0.60 & 0.10 & 0.69 \\
        max & 0.37 & 0.40 & 0.29 & 0.47 & 0.36 & 0.40 & 0.24 & 0.67 & 0.13 & 0.69 \\
        \bottomrule
        \end{tabular}
        \\[0.2em]
        \begin{tabular}{lrlrlrlrlrl}
            \toprule
            graph & \multicolumn{2}{c}{grid7x4} & \multicolumn{2}{c}{hotdog5x6} & \multicolumn{2}{c}{WS1} & \multicolumn{2}{c}{WS2} & \multicolumn{2}{c}{WS3} \\
            opt & FI & THR & FI & THR & FI & THR & FI & THR & FI & THR \\
            \midrule
            min & 0.11 & 0.76 & 0.14 & 0.71 & 0.34 & 0.49 & 0.27 & 0.34 & 0.16 & 0.28 \\
            mean & 0.11 & 0.76 & 0.14 & 0.71 & 0.34 & 0.49 & 0.27 & 0.34 & 0.16 & 0.28 \\
            max & 0.11 & 0.76 & 0.14 & 0.71 & 0.34 & 0.49 & 0.27 & 0.34 & 0.16 & 0.28 \\
            \bottomrule
            \end{tabular}

\end{table}

Scores for all solution sets are listed in Table~\ref{tab:smallGraphsSolutionScores}.
The centrality scores of the solutions further support our claim:
for all graphs of all three types, the score of the THR solution is higher (\iec more central)
than in the FI solution. This even holds when comparing the best FI solution to the worst THR 
solution in this metric.

\paragraph{Discussion.}
We would like to note that the observed behavior of the forest index is not according to our 
original intuition (which was similar to the one given by Zhu \etal~\cite{zhu2023measures})
\emph{before} working on this paper.
Broadly speaking, we generally expected the forest index to be maximized when the number of disconnected
vertex pairs is maximized, because this leads to many high terms in the sum.
The optimal solution (for appropriate $k$) on a grid graph would then be a balanced cut in the middle.
Instead, the optimal solution when using the forest index is a set of edges at the boundary of the grid, disconnecting just a few vertices from a large component.
While such peripheral edges may be desirable in some applications, we argue that in most scenarios
more central edges -- whose deletion ideally even leads to several connected components -- are beneficial
from an attacker's point of view. We further explore this in our case study on (parts of) the Berlin
road network in Section~\ref{sec:experiments}. 
To be able to process non-trivial instances, we propose to adapt a generic greedy algorithm in the
next section.

\section{Greedy Heuristic for \kgrodel}
\label{sec:heuristic}

We adapt the general greedy algorithm previously used for many related problems.
The basic idea of this algorithm is to iteratively pick the edge with best marginal loss until $k$ edges are found (see Algorithm~\ref{alg:greedy}).

The greedy algorithm starts by computing the Laplacian pseudoinverse, which is required to compute the effective resistance and hence the loss when removing an edge from $G$.
Then, in the main loop, iterate all edges in $G$ and compute the loss for each one.
Pick the best edge, update $G$ and $\Lpinv$, and repeat the main loop for $k$ iterations.

For the greedy algorithm, we need a formula to compute the marginal loss when removing an edge (Line~\ref{line:loss_evaluation}) as well as a way to update $\Lpinv$ after choosing an edge to compute the objective function in the next iteration (Line~\ref{line:update}).
These depend on the robustness metric used and will be derived in the next section.

For submodular functions the greedy framework can be combined with lazy evaluation~\cite{Minoux78} to speed up the computation. 
This lazy evaulation stores all candidates in a priority queue with their most recent loss value and instead of evaluating the loss for all candidates in each iteration of the main loop, it iteratively evaluates (and updates) only the top candidates' loss value until the top candidate is a candidate that has been evaluated in the current iteration of the main loop.
Effectively, this lazy evaluation reduces the number of evaluations significantly, while still providing a quality gurantee for submodular problems.
Even though \kgrodel is not known to be submodular for THR and is not submodular for FI~\cite{zhu2023measures}, we still apply this technique because practical experience has proven to lead to good results even for non-submodular problems~\cite{Minoux89}.

\begin{algorithm}[h]
    \begin{algorithmic}[1]
      \begin{small}
      \Function{Greedy}{$G$, $k$}
      \State \textbf{Input:} Graph $G=(V,E)$, $k \in \mathbb{N}$
      \State \textbf{Output:} $G_k$ -- graph after $k$ edge deletions
      \State $G_0 \gets G$
      \State Compute $\Lpinv$ 
      \For {$r \gets 0, \dots, k-1$}  \Comment main loop
            \State Compute $\text{loss}(e)\,\forall e\in E$   \Comment evaluation step \label{line:loss_evaluation}
            \State $e^* \gets \argmax_{e\in E} \text{loss}(e)$ 
            \State $G_{r+1} = G_r \setminus e^* $
            \State  \textsc{Update}($\Lpinv$, $G_{r+1}$)  \Comment {update step} \label{line:update}
      \EndFor
      \State \textbf{return} $G_{r+1}$
      \EndFunction
      \end{small}
    \end{algorithmic}
    \caption{Greedy algorithm for \kgrodel}
    \label{alg:greedy}
\end{algorithm}

Combining the lazy evaluation technique with the general greedy algorithm and THR-based loss and update functon leads to \GreedyTHR:
first, compute $\Lpinv_G$. This takes $\Oh(n^3)$ time (with standard tools in practice).
then, compute the loss for all edges of $G$ and set up a priority queue of all edges by their respective loss value.
In the main loop, get the top entry from the priority queue (using lazy evaluation), remove that edge from $G$ and update $\Lpinv_{G_r}$.

\subsection{Total Harmonic Resistance loss after Deleting an Edge}
\label{sub:harmonic-loss-update}

We now derive an update formula and state the loss formula for THR, which are required for the greedy algorithm.

For the \textsc{Update} step (Line~\ref{line:update}) there are efficient ways to compute $\Lpinv_{G'}$:
Removing an edge $e = \{a,b\}\in E$ from $G$ results in $G' = (V, E\setminus \{e\})$ and $\Lap_{G'} = \Lap_{G} - \easubeb\easubeb^T$, where $\vect{e}_i$ is the $i$-th unit vector.
One can apply the Sherman-Morrison-Formula~\cite{SMFormula} (which holds for $\Lap$ and $\easubeb$ as well) to write:
  \begin{align}
      \Lpinv_{G'} = \Lpinv_{G} + \frac{\Lpinv_{G}\easubeb\easubeb^T\Lpinv_{G}}{1-\easubeb^T\Lpinv_{G}\easubeb} \label{eqn:harmonic_lap_update_formula}\\
      =\Lpinv_{G} + \frac{\Lpinv_{G}\easubeb\easubeb^T\Lpinv_{G}}{1-\effres{G}{a}{b}}\nonumber
  \end{align}

There are limitations to using the Sherman-Morrison-Formula for updates: if the removed edge is a bridge, $\effres{G}{\cdot}{\cdot}=1$~\cite{Mavroforakis15} 
and hence the denominator in Eqn~\ref{eqn:harmonic_lap_update_formula} is 0.
In case the edge is not a bridge though, we can apply the Sherman-Morrison-Formula.

To handle the case of a bridge edge $e$, some more involved computation is required. 
Recall that $\Lap$ is a (permuted) block diagonal matrix where each block corresponds to a component of $G$ (see Section~\ref{sec:prelim}). 
Removing $e$ causes the corresponding block in $\Lap$ to be split into two blocks -- one for each component.
All other blocks of $\Lap$ are not modified by this edge removal.
Since the pseudoinverse of a block diagonal matrix is the block matrix build from the pseudoinverse of each block, $\Lpinv_{G'}$ can be found by computing the pseudoinverse of the two blocks related to $e$ and re-using the other pseudoinverse blocks from $\Lpinv_G$.

One has to keep track of a mapping from each node to its component (since in general $\Lap$ is permuted and we need to know which row/column belongs to which block) which takes $\Oh(n+m)$ time.
For simplicity, we re-compute this after removing a bridge edge (because the pseudoinversion step dominates the running time), but in princible it is possible to dynamically update the connected components instead.
The running time of the update step is either $\Oh(c^2)$ (non-bridge edge) or $\Oh(\max(n+m, c^3))$ (bridge edge), where $c$ is the size of the block matrix (resp. component of $G$) that contains $e$.

For the loss function, the basic formula is $\loss{a}{b} := \THR(G) - \THR(G') = \sum_{u < v} \frac{1}{\effres{G}{u}{v}} - \frac{1}{\effres{G'}{u}{v}}$. 
This formula depends on values in $\Lpinv_G$ and $\Lpinv_{G'}$ (via $\effres{G}{u}{v} = \Lpinv_G[u,u] + \Lpinv_G[v,v] - 2\Lpinv_G[u,v]$). 
Since this is a sum of reciprocal values, deriving an efficient formula proves difficult; we do not know of a closed formula analogous to the forest index or effective resistance yet.
Using the basic formula to compute the loss requires computing $\Lpinv_{G'}$ which we have discussed above. Computing the loss when $\Lpinv_{G'}$ is given takes $\Oh(n^2)$ time and the loss is computed up to $\Oh(km)$ times in the greedy algorithm (in the worst case, the loss is computed for each edge even though we use lazy evaluation). This leads to $\Oh(kmn^2\cdot\max(n+m, c^3))$ time overall for the loss computation. The running time varies a lot depending on the size and number of the components in $G$ and the number of bridge edges.

\subsection{Forest Index loss after Deleting an Edge}
\label{sub:forest-loss-update}

For our experiments, we are also interested in results from the greedy algorithm for FI.
Since the experimental setup by Zhu \etal~\cite{zhu2023measures} is to our knowledge not publicly available, we implemented our own version of this algorithm.
There are two differences in our implementation compared to the algorithm description given in their paper though:
(i) we exploit a connection between forest index and effective resistance to convert the FI computation back to a problem that is based on the Laplacian matrix.
This allows re-use of specialized Laplacian pseudoinverse solvers.
(ii) we use the lazy evaluation technique described in Section~\ref{sec:heuristic}, even though FI is not submodular. 
As mentioned, this technique usually yields good results even for non-submodular problems and in our preliminary experiments we observed no difference in the solution quality.

To derive a marginal loss formula for the forest index, we use a theorem on the connection between effective resistance
and forest distance; it allows to reduce the forest index formula (based on forest distance) 
back to total effective resistance and this reduction facilitates the reuse of some other theorems and algorithms:
  \begin{theorem}\label{thm:forest-distance-resistance}
      Given $G=(V,E)$, define the \emph{augmented Graph} $G_* = (V_*, E_*)$ with a universal vertex $\fstar$ which is connected to all other vertices: $V_* = V \cup \{\fstar\}$ and $E_* = E \cup \{(v, \fstar) : v\in V\}$.
  
      Then $\fd{G}{u}{v} = \effres{G_*}{u}{v} ~ \forall u,v\in V$.
  \end{theorem}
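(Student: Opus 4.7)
The plan is to compute $\effres{G_*}{u}{v}$ by exploiting the block structure that the universal vertex $\fstar$ induces on $\Lap_{G_*}$, and to show that the computation collapses to one involving $\Omega = (\Lap + I)^{-1}$. Ordering the vertices of $V_*$ with $\fstar$ last, each $v \in V$ gains exactly one neighbor (namely $\fstar$), so its diagonal entry becomes $\deg_G(v)+1$, and $\fstar$ has degree $n$; thus
\begin{equation*}
\Lap_{G_*} \;=\; \begin{pmatrix} \Lap + I & -\mathbf{1} \\ -\mathbf{1}^T & n \end{pmatrix}.
\end{equation*}

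Next I would invoke the standard characterization that, since $G_*$ is connected, $\effres{G_*}{u}{v} = y_u - y_v$ for \emph{any} $y \in \mathbb{R}^{n+1}$ with $\Lap_{G_*}\, y = \uvec{u} - \uvec{v}$. The null space of $\Lap_{G_*}$ is spanned by $\mathbf{1}$, so $y_u - y_v$ is invariant under adding any multiple of $\mathbf{1}$ to $y$; the pseudoinverse expression $\Lpinv_{G_*}[u,u] - 2\Lpinv_{G_*}[u,v] + \Lpinv_{G_*}[v,v]$ is simply the particular solution satisfying $\mathbf{1}^T y = 0$, so any other solution yields the same difference.

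The main step is the ansatz $y = (z,\, 0)^T$ with $z \in \mathbb{R}^n$ and the $\fstar$-coordinate forced to zero. Plugging this into the block system yields the two equations $(\Lap + I) z = \uvec{u} - \uvec{v}$ and $\mathbf{1}^T z = 0$. Because $\Lap \succeq 0$, the matrix $\Lap + I$ is invertible (even if $G$ is disconnected), so the natural choice is $z = \Omega (\uvec{u} - \uvec{v})$. The consistency condition $\mathbf{1}^T z = 0$ comes for free: from $\Lap \mathbf{1} = 0$ one gets $(\Lap + I)\mathbf{1} = \mathbf{1}$, hence $\Omega \mathbf{1} = \mathbf{1}$, and by symmetry $\mathbf{1}^T \Omega = \mathbf{1}^T$, which forces $\mathbf{1}^T z = \mathbf{1}^T(\uvec{u}-\uvec{v}) = 0$.

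Reading off the resistance then immediately gives
\begin{equation*}
\effres{G_*}{u}{v} \;=\; z_u - z_v \;=\; (\uvec{u}-\uvec{v})^T \Omega (\uvec{u}-\uvec{v}) \;=\; \Omega[u,u] - 2\Omega[u,v] + \Omega[v,v] \;=\; \fd{G}{u}{v}.
\end{equation*}
There is no genuinely hard obstacle: the only subtle point is justifying that the ansatz $y_{\fstar}=0$ is admissible, and this is precisely the $\mathbf{1}^T z = 0$ check, which is automatic. It is worth highlighting that the argument works uniformly whether or not $G$ is connected, since $\Lap + I$ is always invertible --- this matches the motivation that forest distance handles disconnected graphs natively.
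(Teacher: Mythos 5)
Your proof is correct, but it is worth noting that the paper does not prove Theorem~\ref{thm:forest-distance-resistance} itself: it defers to Proposition~7 of Chebotarev and Agaev~\cite{chebotarev2000forest} (modulo a harmless difference in the definition of forest distance). The cited argument lives in the matrix-forest framework, i.e., it rests on the correspondence between rooted spanning forests of $G$ and spanning structures of the augmented graph $G_*$, whereas you argue purely linear-algebraically: you write the augmented Laplacian in block form $\left(\begin{smallmatrix} \Lap+I & -\mathbf{1}\\ -\mathbf{1}^T & n\end{smallmatrix}\right)$, use the flow characterization of effective resistance (any solution $y$ of $\Lap_{G_*}y=\uvec{u}-\uvec{v}$ gives $\effres{G_*}{u}{v}=y_u-y_v$, since the kernel of the connected $G_*$'s Laplacian is spanned by $\mathbf{1}$ and the pseudoinverse solution is just the representative with $\mathbf{1}^Ty=0$), and then exhibit the explicit solution $y=(\Omega(\uvec{u}-\uvec{v}),\,0)$, whose admissibility is exactly the condition $\mathbf{1}^T\Omega(\uvec{u}-\uvec{v})=0$, automatic because $\Omega\mathbf{1}=\mathbf{1}$. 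All the individual steps check out: the block form of $\Lap_{G_*}$, the invariance of $y_u-y_v$ across solutions, and the final identification $z_u-z_v=(\uvec{u}-\uvec{v})^T\Omega(\uvec{u}-\uvec{v})=\fd{G}{u}{v}$. What your route buys is a short, self-contained derivation that works verbatim for disconnected $G$ (and extends readily to weighted graphs); what the combinatorial route buys is the forest-counting interpretation that explains the name and the original definition of the forest matrix.
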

  The proof (with a slight change in the forest distance definition that does not affect the validity of the result)
  can be found in Ref.~\cite[Proposition~7]{chebotarev2000forest}.
   From Thm.~\ref{thm:forest-distance-resistance} the following result can be derived:
  
  \begin{proposition}\label{cor:totals_equal}
      The forest index can be written in terms of the effective resistance of the augmented graph: 
      $R_f(G) =  n\cdot\trace{\Lpinv_{G_*}} - (n+1)\cdot\Lpinv_{G_*}[\fstar,\fstar]$.
  \end{proposition}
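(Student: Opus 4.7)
The plan is to reduce the forest index sum on $V$ to the full total effective resistance on $V_*$ minus the contributions from pairs involving the universal vertex $\fstar$, and then evaluate each piece via the trace formula and the nullspace property of the Laplacian pseudoinverse.

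First I would start from Theorem~\ref{thm:forest-distance-resistance} and rewrite
\begin{equation*}
R_f(G) \;=\; \sum_{\substack{u<v \\ u,v\in V}} \fd{G}{u}{v} \;=\; \sum_{\substack{u<v \\ u,v\in V}} \effres{G_*}{u}{v}.
\end{equation*}
Completing the sum to run over all unordered pairs in $V_* = V \cup \{\fstar\}$ and subtracting the terms that involve $\fstar$ gives
\begin{equation*}
R_f(G) \;=\; R_r(G_*) \;-\; \sum_{v \in V} \effres{G_*}{v}{\fstar}.
\end{equation*}
Applying Eq.~\eqref{eqn:effective_resistance_trace} to $G_*$ (which has $n+1$ vertices) immediately yields $R_r(G_*) = (n+1)\cdot\trace{\Lpinv_{G_*}}$, so the task reduces to evaluating the correction term involving $\fstar$.

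Next I would expand the resistance to $\fstar$ via the standard pseudoinverse identity, writing
\begin{equation*}
\sum_{v \in V} \effres{G_*}{v}{\fstar} \;=\; \sum_{v\in V}\bigl(\Lpinv_{G_*}[v,v] - 2\,\Lpinv_{G_*}[v,\fstar] + \Lpinv_{G_*}[\fstar,\fstar]\bigr).
\end{equation*}
The diagonal entries over $v\in V$ add up to $\trace{\Lpinv_{G_*}} - \Lpinv_{G_*}[\fstar,\fstar]$, and the constant term contributes $n\cdot\Lpinv_{G_*}[\fstar,\fstar]$. For the cross terms I would invoke the nullspace property $\Lpinv_{G_*}\mathbf{1}=0$ recalled in Section~\ref{sec:prelim}, which forces $\sum_{w\in V_*} \Lpinv_{G_*}[w,\fstar] = 0$ and therefore $\sum_{v\in V}\Lpinv_{G_*}[v,\fstar] = -\Lpinv_{G_*}[\fstar,\fstar]$. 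Combining these three contributions yields
\begin{equation*}
\sum_{v\in V}\effres{G_*}{v}{\fstar} \;=\; \trace{\Lpinv_{G_*}} + (n+1)\cdot\Lpinv_{G_*}[\fstar,\fstar].
\end{equation*}

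Subtracting this from $(n+1)\trace{\Lpinv_{G_*}}$ collapses to the claimed expression $n\cdot\trace{\Lpinv_{G_*}} - (n+1)\cdot\Lpinv_{G_*}[\fstar,\fstar]$. The only subtle point, and what I would flag as the main obstacle, is the correct handling of the off-diagonal cross terms: one must remember that the sum over $v\in V$ excludes the index $\fstar$ itself, so the nullspace identity must be rearranged rather than applied directly. Everything else is bookkeeping with the trace formula \eqref{eqn:effective_resistance_trace} and Theorem~\ref{thm:forest-distance-resistance}.
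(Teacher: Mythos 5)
Your proof is correct, and it takes a slightly different route than the paper's. The paper never forms $R_r(G_*)$: it symmetrizes the forest-index sum over ordered pairs in $V\times V$ only (the ``zero term'' it adds is the diagonal $u=v$), expands every effective resistance into entries of $\Lpinv_{G_*}$, and then eliminates the missing $\fstar$ row and column twice via the zero row/column sums of $\Lpinv_{G_*}$. You instead complete the pair sum all the way to $V_*$, so that $R_f(G) = R_r(G_*) - \sum_{v\in V}\effres{G_*}{v}{\fstar}$, invoke the trace identity of Eq.~(\ref{eqn:effective_resistance_trace}) for the $(n+1)$-vertex graph $G_*$ as a black box, and only compute the single correction row by hand. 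Your decomposition buys modularity (one application of an already-established formula plus one short row-sum calculation), while the paper's is self-contained and never needs to mention $R_r(G_*)$; both ultimately rest on the same two ingredients, namely Theorem~\ref{thm:forest-distance-resistance} and the nullspace property $\Lpinv_{G_*}\mathbf{1}=0$. The only point worth making explicit in your version is that Eq.~(\ref{eqn:effective_resistance_trace}) is legitimate for $G_*$ because the universal vertex $\fstar$ makes $G_*$ connected regardless of the structure of $G$; your cross-term bookkeeping, $\sum_{v\in V}\Lpinv_{G_*}[v,\fstar] = -\Lpinv_{G_*}[\fstar,\fstar]$, is exactly the rearrangement the paper also performs, and your final arithmetic checks out.
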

  \begin{proof}
    See Appendix~\ref{sub:proof-loss}. The main idea is to extend the forest index sum by adding a zero term, which then includes the trace of $\Lpinv_{G_*}$.
  \end{proof}

  \paragraph{Edge Removal.}
  We can now use Proposition~\ref{cor:totals_equal} to write the forest index $R_f(G)$ in terms of the augmented graph $G_*$ and $\Lpinv_{G_*}$, which allows us to compute the marginal loss via $\Lpinv_{G_*}$ when removing an edge from $G$.
  
  Removing an edge $\{a,b\}\in E$ from $G$ results in $G' = (V, E\setminus \{\{a,b\}\})$ and $\Lap_{G'_*} = \Lap_{G_*} - \easubeb\easubeb^T$, where $\vect{e}_i$ is the $i$-th unit vector.
  Apply the Sherman-Morrison-Formula~\cite{SMFormula} to write:
  \begin{align}
      \Lpinv_{G'_*} =\Lpinv_{G_*} + \frac{\Lpinv_{G_*}\easubeb\easubeb^T\Lpinv_{G_*}}{1-\effres{G_*}{a}{b}}\label{eqn:forest_lap_update_formula}
  \end{align}

  To calculate the $\loss{a}{b} := R_f(G') - R_f(G)$ when removing $e=\{a,b\}$ from $G$, we can use Eqs.~(\ref{eqn:effective_resistance_trace}) and~(\ref{eqn:forest_lap_update_formula}) and the connection to total effective resistance (Proposition~\ref{cor:totals_equal}):
  
  \begin{proposition}
    \label{prop:loss}
      The marginal loss for the forest index when removing edge $(a,b)$ from $G$ is:
            
  \begin{align*}
    \loss{a}{b}
    = \frac{n}{1-\effres{G_*}{a}{b}}\cdot\norm{\Lpinv_{G_*}[:,a] - \Lpinv_{G_*}[:,b]}^2  \\
    - \frac{n+1}{1-\effres{G_*}{a}{b}}\cdot(\Lpinv_{G_*}[\fstar,a] - \Lpinv_{G_*}[\fstar,b])^2.
\end{align*}
  \end{proposition}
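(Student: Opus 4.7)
The plan is to combine the closed-form expression for the forest index from Proposition~\ref{cor:totals_equal} with the Sherman-Morrison update for $\Lpinv_{G_*}$ from Eqn.~(\ref{eqn:forest_lap_update_formula}). Writing $\loss{a}{b} = R_f(G') - R_f(G)$ and substituting Proposition~\ref{cor:totals_equal} for both $R_f(G')$ and $R_f(G)$ gives
\[
\loss{a}{b} = n\cdot\trace{\Lpinv_{G'_*} - \Lpinv_{G_*}} - (n+1)\cdot\bigl(\Lpinv_{G'_*}[\fstar,\fstar] - \Lpinv_{G_*}[\fstar,\fstar]\bigr).
\]
From here everything reduces to evaluating the trace and the $(\fstar,\fstar)$-entry of the rank-one correction $\Lpinv_{G'_*} - \Lpinv_{G_*} = \frac{\Lpinv_{G_*}\easubeb\easubeb^T\Lpinv_{G_*}}{1-\effres{G_*}{a}{b}}$.

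For the trace term, I would apply the cyclic property to obtain $\trace{\Lpinv_{G_*}\easubeb\easubeb^T\Lpinv_{G_*}} = \easubeb^T(\Lpinv_{G_*})^2\easubeb = \norm{\Lpinv_{G_*}\easubeb}^2$, and then use symmetry of $\Lpinv_{G_*}$ to identify $\Lpinv_{G_*}\easubeb$ with the column difference $\Lpinv_{G_*}[:,a] - \Lpinv_{G_*}[:,b]$. For the $(\fstar,\fstar)$-entry, I would left- and right-multiply the rank-one correction by $\uvec{\fstar}$, giving $(\uvec{\fstar}^T\Lpinv_{G_*}\easubeb)^2 = (\Lpinv_{G_*}[\fstar,a] - \Lpinv_{G_*}[\fstar,b])^2$ in the numerator. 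Assembling both pieces with the common denominator $1-\effres{G_*}{a}{b}$ yields exactly the claimed formula.

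The only genuinely delicate point is the applicability of Sherman-Morrison: as noted after Eqn.~(\ref{eqn:harmonic_lap_update_formula}), the update breaks down when the removed edge is a bridge, since then $\effres{\cdot}{a}{b}=1$ and the denominator vanishes. The key observation I would emphasize is that we are inverting in the augmented graph $G_*$, not in $G$, and in $G_*$ the universal vertex $\fstar$ always supplies an alternative path $a$--$\fstar$--$b$. Hence the edge $\{a,b\}$ is never a bridge of $G_*$ (as long as $\{a,b\} \in E$), so $\effres{G_*}{a}{b} < 1$ strictly, and the Sherman-Morrison application is always valid. This is precisely why the augmented-graph reformulation of Theorem~\ref{thm:forest-distance-resistance} is convenient for the greedy algorithm: the bridge complications that required the special block-pseudoinversion treatment in Section~\ref{sub:harmonic-loss-update} do not arise here. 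The remaining steps are routine linear algebra, so I do not expect any additional obstacles beyond bookkeeping of the two trace contributions.
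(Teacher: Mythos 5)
Your proposal is correct and follows essentially the same route as the paper's own proof: substitute Proposition~\ref{cor:totals_equal} into $\loss{a}{b}=R_f(G')-R_f(G)$, plug in the Sherman--Morrison update of Eqn.~(\ref{eqn:forest_lap_update_formula}), and evaluate the trace and the $(\fstar,\fstar)$-entry of the rank-one correction exactly as you describe. Your added observation that $\{a,b\}$ can never be a bridge in the augmented graph $G_*$ (so $\effres{G_*}{a}{b}<1$ and the update is always valid) is a nice point that the paper leaves implicit.
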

  \begin{proof}
    See Appendix~\ref{proof:lemma-loss}.
  \end{proof}

We use these loss and update formulae in our greedy algorithm, which allows us to re-use existing code.
Running times for the loss and update computation are $\Oh(n)$ and $\Oh(n^2)$ respectively, which results in a overall running time of $\Oh(kmn)$ and $\Oh(kn^2)$ for $k$ iterations of the greedy algorithm.

\section{Experimental Results}
\label{sec:experiments}

\subsection{Experimental Setup}

We conduct experiments to evaluate the quality of the greedy solution for THR to the greedy solution for FI (\GreedyTHR and \GreedyFI).
Our algorithms are implemented in C++ using the NetworKit toolkit~\cite{DBLP:books/sp/22/AngrimanGHMP22} as a graph library. We also build upon the previous work in 
Refs.~\cite{DBLP:conf/asunam/PredariKM22,PredariBKM23}.
To solve linear systems and compute the pseudoinverse, we use the LAMG solver from NetworKit.
\tool{SimExPal}~\cite{DBLP:journals/algorithms/AngrimanGLMNPT19} is used to manage our experiments and analyze the results.
All experiments are run on a machine with an Intel Xeon 6126 CPU and 192 GB RAM.

Code and the experimental setup are available on github: \url{https://github.com/bernlu/GRoDel-THR-FI}.

{
Table~\ref{tab:graphs} lists all networks used in our case study and benchmark study with their approximate number of nodes and edges.
For the following analysis, we split them into two groups: \emph{small} graphs with $|V| < 50K$ and \emph{large} graphs with $|V| > 50K$.
These networks are taken from SNAP~\cite{snap}, Networkrepository~\cite{NR} and KONECT~\cite{DBLP:conf/www/Kunegis13}.
For our experiments, we perform preprocessing on these graphs to turn them into simple graphs by removing self-loops, multi-edges and edge weights; we use the largest connected component of each graph.
We set the accuracy parameter $\epsilon$ of our LAMG solver (which we use to compute $\Lpinv$) to $10^{-5}$.

\begin{table}
    \centering
    \begin{small}
    \caption{Graph instances used for experiments, their vertex and edge counts after preprocessing, and the mean closeness centrality of the greedy THR and FI solutions.}\label{tab:graphs}
    \begin{tabular}{|lrr|cc|}
        \hline
        Graph                 & $|V|$ & $|E|$ & THR & FI \\ \hline                                           
        euro-road              & 1K   & 1.3K  &  0.661 & 	0.160 \\
        EmailUniv             & 1K    & 5.4K  & 0.405    & 0.526  \\   
        air-traffic-control & 1.2K & 2.4K &  0.579 &  0.063 \\ 
        inf-power             & 4K    & 6K  & 0.858    & 0.257  \\
        web-spam              & 4K    & 37K  & 0.457   & 0.039 \\
        Bcspwr10              & 5.3K  & 8.2K & 0.301  & 0.740 \\ 
        Erdos992              & 6K    & 7.5K & 0.643&0.691 \\                                        
        Reality               & 6.8K  & 7.6K & 0.825 &0.508 \\   
        \hline                                        
        Mitte-Berlin-Germany  & 1K    & 1.5K & 0.648 & 0.334 \\   
        Treptow-Köpenick-Berlin-Germany & 3.6K & 5.2K  &0.733 & 0.283\\
        \hline
    \end{tabular}
    \end{small}
\end{table}

\subsection{Case Study: Berlin Districts}

For our case study, we use the road networks of two Berlin districts (Table~\ref{tab:graphs}).
We choose these networks because road networks in general are easy to visualize and understand intuitively; Berlin specifically has some rivers flowing through the city which create cuts for many districts.
These river bridges make for a natural solution to \kgrodel which we will use as a manually chosen solution to compare to the greedy solutions.

Our graphs are generated from OpenStreetMap~\cite{OpenStreetMap} data using the \tool{osmnx}~\cite{boeing2017osmnx} python library.
We convert the data into a simple graph and use our NetworKit-based greedy algorithm to find the solution for both THR and FI.

}
The solutions for the \emph{Mitte} district are drawn on a map for visual inspection (Figure~\ref{fig:berlin_mitte}).
One can clearly see that \GreedyTHR finds some of the river bridges and other main roads, while \GreedyFI finds less important streets.
We also compare the solutions to the hand-picked solution that consists of the seven river bridges in this district (12 edges in total in our network because of multi-lane bridges).
The THR of this manual solution is larger than that of the \GreedyTHR solution, even though the greedy solution was computed for $k=20$ edges -- this further indicates that THR is a metric that prioritizes edges in a way we consider desirable.
In contrast to the previous observation, the FI score of the manual solution is \emph{worse} than the solution (of the same size) found by \GreedyFI; this is another hint that FI prioritizes edges in the periphery of a network.
We have also investigated other districts of Berlin, with very similar results: \GreedyTHR finds some bridges and large streets; \GreedyFI finds edges in the periphery and a manual choice of river bridges is better than the greedy solution.

\begin{figure}[tb]%
    \centering
        \includegraphics[width=\textwidth]{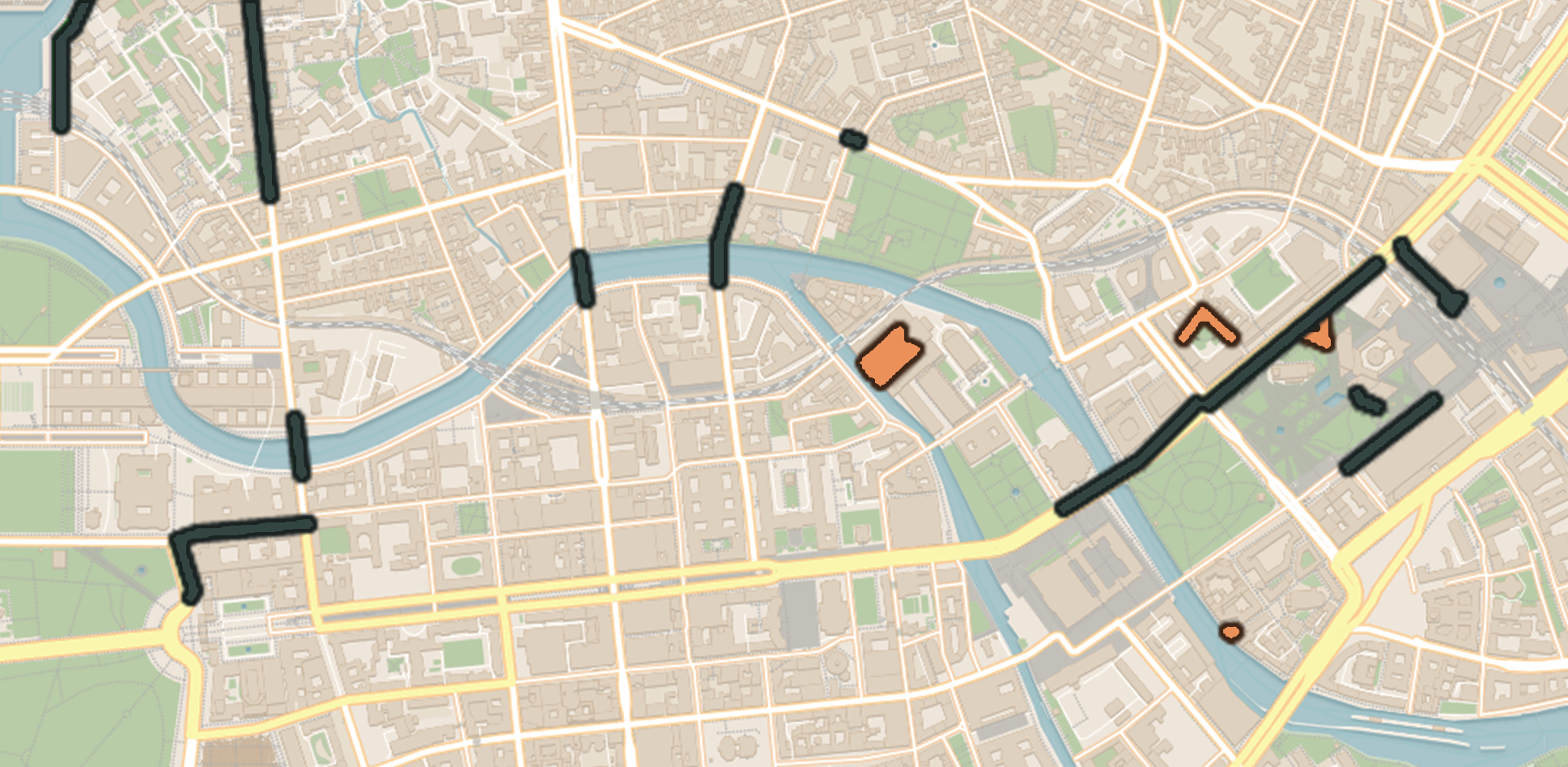}
    \caption{Berlin case study result. Grey edges are the \GreedyTHR solution; orange edges are the \GreedyFI solution. The image is cropped -- not all edges in the solution are displayed. \GreedyTHR finds four of seven river bridges while \GreedyFI mostly finds residential roads. Image created using OpenStreetMap~\cite{OpenStreetMap}}\label{fig:berlin_mitte}
\end{figure}

\subsection{Benchmark Results}
For the benchmark graphs, we evaluate the results by comparing the average closeness centrality of the solutions using the same we method described for the exact solutions in Section~\ref{sec:exact-solutions}.

Results are available in Table~\ref{tab:graphs}.
For most benchmark graphs we observe that the \GreedyTHR solution is considerably more central than the \GreedyFI solution; on average, the \GreedyTHR solution is about 25\% more central in the closeness centrality metric.
In the \emph{Bcspwr10} graph \GreedyTHR provides a considerably less central solution than \GreedyFI{} -- though there is no obvious reason for this result.

Regarding running times, with a timeout of 12 hours we found solutions for graphs with up to $6.8$K nodes or $13$K edges.
We observe that the network structure (esp. the amount of bridge edges) has significant impact on running times -- which is expected given the two ways to compute the update step, where the update for bridge edges is much more expensive.
As expected, running times for \GreedyFI are 2-4 orders of magnitude lower than \GreedyTHR.
The reason for this is that we can use the much more efficient loss formula using the trace of $\Lpinv$ for \GreedyFI while we do not know of an analogous formula for \GreedyTHR.

\section{Conclusions}
\label{sec:conclusions}
With the protection of large infrastructure in mind, we considered the \kgrodel problem to identify 
a set of $k$ particularly vulnerable edges in a graph. To this end, we proposed total harmonic
resistance as objective function and compared it against the recently proposed forest index.

We show with small examples where we compute the exact solution that total harmonic resistance prioritizes more central edges than the forest index.
We adapt the general greedy algorithm for similar optimization problems to \kgrodel with total harmonic resistance and show in a case study on the Berlin road network that THR favors more central edges in larger examples as well.
Finally, we run benchmark experiments which show that THR mostly favors more central edges than FI in a range of different network types.
We note that the greedy algorithm for THR has higher time complexity than the greedy algorithm for FI and our experiments confirm this in practice.

In the future, we would like to focus on speeding up the greedy algorithm for THR by improving the update and loss formulae and by finding other, faster heuristics. These are highly complex problems because of the reciprocity in the objective function -- which prevents re-use of most of the results and techniques used for related robustness measures like total effective resistance or forest index.

\paragraph{Acknowledgments}

We would like to thank Rob Kooij from TU Delft for insightful discussions on total harmonic resistance and many related measures.

\bibliography{references}

\pagebreak
\appendix
\setcounter{table}{0}
\renewcommand{\thetable}{A\arabic{table}}
\section{Appendix}

\subsection{Proof of Proposition~\ref{cor:totals_equal}}
\label{sub:proof-loss}
\begin{proof}
    Let $G=(V,E), u,v\in V$ and $G_* = (V_*, E_*)$ the augmented graph as described in Theorem~\ref{thm:forest-distance-resistance} with universal vertex $\fstar$. Then:
    \begin{align*}
        &R_f(G) = \sum_{u < v}  \fd{G}{u}{v} 
        = \sum_{u < v} \effres{G_*}{u}{v} \\
        &= \frac{1}{2} \sum_{u,v\in V}  \effres{G_*}{u}{v}  \\ 
        &= \frac{1}{2} \sum_{u\in V}  \sum_{v\in V} (\Lpinv_{G_*}[u,u] + \Lpinv_{G_*}[v,v] - 2~\Lpinv_{G_*}[u,v]) \\
        &= \frac{1}{2} \sum_{u\in V}  \sum_{v\in V}\Lpinv_{G_*}[u,u] +  \frac{1}{2}\sum_{u\in V}  \sum_{v\in V} \Lpinv_{G_*}[v,v] \eqnlinebreak - \frac{1}{2}\sum_{u\in V}  \sum_{v\in V}2~\Lpinv_{G_*}[u,v] \\
        &= \frac{n}{2} \sum_{u\in V}  \Lpinv_{G_*}[u,u] +  \frac{n}{2}  \sum_{v\in V} \Lpinv_{G_*}[v,v] \eqnlinebreak - \sum_{u\in V}  \sum_{v\in V}\Lpinv_{G_*}[u,v] \\
        &= n \sum_{u\in V}  \Lpinv_{G_*}[u,u] - \sum_{u\in V}  \sum_{v\in V}\Lpinv_{G_*}[u,v] \\
        &= n (\trace{\Lpinv_{G_*}} - \Lpinv_{G_*}[\fstar,\fstar]) - \sum_{u\in V}  -\Lpinv_{G_*}[u,\fstar] \\
        &= n\cdot\trace{\Lpinv_{G_*}} - n\cdot\Lpinv_{G_*}[\fstar,\fstar] + \sum_{u\in V}  \Lpinv_{G_*}[u,\fstar]\\
        &= n\cdot\trace{\Lpinv_{G_*}} - n\cdot\Lpinv_{G_*}[\fstar,\fstar] - \Lpinv_{G_*}[\fstar,\fstar] \\
        &= n\cdot\trace{\Lpinv_{G_*}} - (n+1)\cdot\Lpinv_{G_*}[\fstar,\fstar]\\
    \end{align*}
\end{proof}

\pagebreak

\subsection{Proof of Proposition~\ref{prop:loss}}\label{proof:lemma-loss}
\begin{proof}
    
\begin{align*}
    &\loss{a}{b}  = R_f(G') - R_f(G) \\
    &= n\cdot\trace{\Lpinv_{G_*'}} - (n+1)\cdot\Lpinv_{G_*'}[\fstar,\fstar] \eqnlinebreak  - (n\cdot\trace{\Lpinv_{G_*}} - (n+1)\cdot\Lpinv_{G_*}[\fstar,\fstar])   \\
    &= n\cdot\trace{\Lpinv_{G_*'}} - n\cdot\trace{\Lpinv_{G_*}} \eqnlinebreak  - (n+1)\cdot\Lpinv_{G_*'}[\fstar,\fstar]  + (n+1)\cdot\Lpinv_{G_*}[\fstar,\fstar]   \\
    &= n\cdot(\trace{\Lpinv_{G_*'}} - \trace{\Lpinv_{G_*}}) \eqnlinebreak  + (n+1)\cdot(\Lpinv_{G_*}[\fstar,\fstar]-  \Lpinv_{G_*'}[\fstar,\fstar])  \\
    &= n\cdot\trace{\Lpinv_{G_*'} - \Lpinv_{G_*}} \eqnlinebreak  + (n+1)\cdot(\Lpinv_{G_*}[\fstar,\fstar]- \Lpinv_{G_*'}[\fstar,\fstar])  \\
    &= n\cdot\trace{\Lpinv_{G_*} + \frac{\Lpinv_{G_*}\easubeb\easubeb^T\Lpinv_{G_*}}{1-\effres{G_*}{a}{b}} - \Lpinv_{G_*}} \eqnlinebreak  + (n+1)\cdot(\Lpinv_{G_*}[\fstar,\fstar]- \Lpinv_{G_*'}[\fstar,\fstar]) \\
    &= \frac{n}{1-\effres{G_*}{a}{b}}\cdot\trace{\Lpinv_{G_*}\easubeb\easubeb^T\Lpinv_{G_*}} \eqnlinebreak  + (n+1)\cdot(\Lpinv_{G_*}[\fstar,\fstar]- \Lpinv_{G_*'}[\fstar,\fstar]) \\
    &= \frac{n}{1-\effres{G_*}{a}{b}}\cdot\norm{\Lpinv_{G_*}[:,a] - \Lpinv_{G_*}[:,b]}^2  \eqnlinebreak  + (n+1)\cdot(\Lpinv_{G_*}[\fstar,\fstar]- \Lpinv_{G_*'}[\fstar,\fstar]) \\
    &= \frac{n}{1-\effres{G_*}{a}{b}}\cdot\norm{\Lpinv_{G_*}[:,a] - \Lpinv_{G_*}[:,b]}^2  \eqnlinebreak  - \frac{n+1}{1-\effres{G_*}{a}{b}}\cdot(\Lpinv_{G_*}[\fstar,a] - \Lpinv_{G_*}[\fstar,b])^2
\end{align*}

The last line is derived as follows:
\begin{align*}
    &\Lpinv_{G_*}[\fstar,\fstar]- \Lpinv_{G_*'}[\fstar,\fstar] \\
    &= \uvec{\fstar}^T\Lpinv_{G_*}\uvec{\fstar}- \uvec{\fstar}^T\Lpinv_{G_*'}\uvec{\fstar} \\
    &= \uvec{\fstar}^T\Lpinv_{G_*}\uvec{\fstar}- \uvec{\fstar}^T(\Lpinv_{G_*} + \frac{\Lpinv_{G_*}\easubeb\easubeb^T\Lpinv_{G_*}}{1-\effres{G_*}{a}{b}})\uvec{\fstar} \\
    &= \uvec{\fstar}^T\Lpinv_{G_*}\uvec{\fstar}- \uvec{\fstar}^T\Lpinv_{G_*}\uvec{\fstar} \\ & - \uvec{\fstar}^T\frac{\Lpinv_{G_*}\easubeb\easubeb^T\Lpinv_{G_*}}{1-\effres{G_*}{a}{b}}\uvec{\fstar} \\
    &= - \uvec{\fstar}^T\frac{\Lpinv_{G_*}\easubeb\easubeb^T\Lpinv_{G_*}}{1-\effres{G_*}{a}{b}}\uvec{\fstar} \\
    &= \frac{-1}{1-\effres{G_*}{a}{b}}(\uvec{\fstar}^T\Lpinv_{G_*}\uvec{a} - \uvec{\fstar}^T\Lpinv_{G_*}\uvec{b})^2\\
    &= -\frac{(\Lpinv_{G_*}[\fstar,a] - \Lpinv_{G_*}[\fstar,b])^2}{1-\effres{G_*}{a}{b}}
\end{align*}
\end{proof}

\if #0

\begin{landscape}

\newcommand{\vtext}[1]{\rotatebox{270}{#1}}

\begin{table}[tb]
    \centering
    \caption{Absolute running time in seconds.}\label{tab:abs_runtimes}

    \resizebox{\pdfpagewidth}{!}{%
    \begin{tabular}{llrrrrrrrrrrrrrrrrrrr}
        \toprule
          & instance & \vtext{facebook-ego-combined} & \vtext{web-spam} & \vtext{inf-power} & \vtext{wiki-Vote} & \vtext{p2p-Gnutella09} & \vtext{p2p-Gnutella04} & \vtext{ca-HepPh} & \vtext{web-indochina-2004} & \vtext{web-webbase-2001} & \vtext{ca-AstroPh} & \vtext{as-caida20071105} & \vtext{cit-HepTh} & \vtext{ia-email-EU} & \vtext{loc-brightkite} & \vtext{soc-Slashdot0902} & \vtext{ia-wiki-Talk} & \vtext{livemocha} & \vtext{flickrEdges} & \vtext{road-usroads} \\
        k-size & Heuristic &  &  &  &  &  &  &  &  &  &  &  &  &  &  &  &  &  &  &  \\
        \midrule
        \multirow[t]{5}{*}{2} & stGreedy & 3.0 & 1.9 & 0.7 & 5.7 & 3.8 & 113.5 & 15.1 & 9.8 & 8.2 & 50.8 & 25.2 & 148.6 & 49.4 & 279.0 & 1084.5 & 984.9 & 7008.1 & 6399.9 & 989.3 \\
         & simplStoch & 3.6 & 2.2 & 0.7 & 6.2 & 4.0 & 110.8 & 15.7 & 10.7 & 8.2 & 52.4 & 28.0 & 154.3 & 46.5 & 260.1 & 1298.1 & 833.7 & 7114.4 & 6219.4 & 1035.7 \\
         & colStoch & NaN & NaN & 0.4 & 2.0 & 1.5 & 48.7 & NaN & 3.1 & 2.3 & NaN & 14.3 & NaN & 9.9 & 70.7 & 387.3 & 241.9 & NaN & NaN & 205.1 \\
         & simplStochJLT & 4.8 & 3.1 & 1.0 & 5.4 & 3.1 & 57.9 & 7.7 & 4.1 & 2.4 & 11.6 & 6.5 & 18.5 & 5.0 & 16.4 & 84.4 & 47.3 & 164.7 & 216.7 & 18.3 \\
         & colStochJLT & 3.5 & 3.4 & 0.7 & 4.4 & 2.4 & 48.8 & 6.0 & 3.0 & 2.0 & 8.8 & 3.3 & 15.4 & 30.8 & 14.7 & 87.1 & 45.6 & 152.4 & 151.2 & 25.9 \\
        \cline{1-21}
        \multirow[t]{5}{*}{5} & stGreedy & 3.0 & 1.9 & 0.7 & 5.8 & 4.3 & 73.0 & 15.0 & 10.6 & 9.5 & 51.4 & 30.3 & 149.1 & 53.1 & 293.3 & 1222.5 & 1325.9 & 7684.7 & 6758.0 & 1131.9 \\
         & simplStoch & 3.6 & 2.2 & 0.8 & 6.6 & 4.0 & 88.2 & 15.9 & 10.6 & 10.1 & 53.3 & 50.8 & 169.8 & 55.4 & 295.6 & 1499.5 & 842.3 & 7136.3 & 6256.4 & 1120.1 \\
         & colStoch & NaN & 1.2 & 0.6 & 2.7 & 2.1 & 584.7 & 6.3 & 4.4 & 3.6 & 21.1 & 9.9 & NaN & 17.5 & 112.7 & 463.3 & 289.6 & NaN & NaN & 275.0 \\
         & simplStochJLT & 8.7 & 6.6 & 1.5 & 10.7 & 6.0 & 684.5 & 14.2 & 7.6 & 4.3 & 20.0 & 10.5 & 33.4 & 32.3 & 30.1 & 439.3 & 76.1 & 303.6 & 283.1 & 28.8 \\
         & colStochJLT & 8.0 & 6.2 & 1.6 & 8.4 & 5.4 & 71.7 & 312.4 & 6.4 & 3.9 & 17.6 & 6.1 & 29.1 & 7.5 & 28.8 & 169.6 & 75.4 & 272.3 & 604.1 & 38.5 \\
        \cline{1-21}
        \multirow[t]{5}{*}{20} & stGreedy & 3.3 & 2.1 & 0.9 & 6.7 & 5.3 & 116.1 & 20.4 & 13.5 & 17.1 & 57.9 & 54.2 & 177.2 & 96.4 & 466.0 & 1427.1 & 1464.0 & 9787.9 & 7251.6 & 1953.3 \\
         & simplStoch & 3.6 & 2.4 & 1.0 & 7.3 & 5.4 & 88.2 & 21.4 & 13.4 & 24.5 & 60.8 & 99.1 & 216.6 & 111.2 & 469.3 & 1637.1 & 1349.1 & 7482.7 & 6924.4 & 1985.7 \\
         & colStoch & 3.9 & 2.7 & 1.6 & 5.5 & 4.5 & 96.1 & 12.3 & 9.3 & 9.4 & 30.0 & 30.1 & 100.3 & 58.8 & 174.9 & 839.3 & NaN & NaN & 3551.8 & 558.8 \\
         & simplStochJLT & 28.6 & 21.2 & 5.5 & 332.6 & 21.2 & 826.9 & 44.3 & 24.8 & 14.3 & 65.1 & 27.0 & 104.6 & 51.2 & 697.8 & 404.9 & 1476.5 & 956.9 & 952.8 & 88.8 \\
         & colStochJLT & 27.7 & 323.2 & 5.3 & 33.4 & 22.4 & 808.4 & 44.9 & 24.9 & 14.8 & 67.3 & 20.1 & 103.3 & 22.6 & 96.4 & 367.7 & 269.0 & 866.2 & 987.7 & 97.1 \\
        \cline{1-21}
        \multirow[t]{5}{*}{50} & stGreedy & 3.6 & 2.6 & 1.4 & 8.1 & 8.0 & 120.1 & 23.3 & 19.8 & 32.6 & 73.7 & 117.2 & 231.8 & 177.9 & 765.6 & 2553.4 & 2286.2 & 9215.1 & 8244.5 & 3526.4 \\
         & simplStoch & 3.9 & 2.9 & 1.5 & 8.9 & 8.3 & 80.6 & 24.2 & 18.6 & 29.2 & 72.0 & 178.3 & 317.7 & 223.7 & 815.1 & 2414.1 & 2378.8 & 8750.1 & 7408.3 & 3556.7 \\
         & colStoch & 6.6 & 5.1 & 3.5 & 9.5 & 8.3 & 726.1 & 23.1 & 16.6 & 19.7 & 53.5 & 51.2 & 151.9 & 141.2 & 304.5 & 1284.9 & NaN & 5339.5 & 4655.3 & 1026.7 \\
         & simplStochJLT & 76.4 & 56.2 & 12.6 & 86.6 & 55.0 & 1134.8 & 109.5 & 58.8 & 32.5 & 464.7 & 58.3 & 268.4 & 78.2 & 226.5 & 1148.6 & 1450.2 & 2062.2 & 2487.4 & 203.9 \\
         & colStochJLT & 75.8 & 54.5 & 13.9 & 81.8 & 55.1 & 1101.3 & 110.3 & 61.5 & 34.3 & 158.2 & 48.3 & 553.9 & 103.2 & 528.1 & 959.6 & 700.8 & 1991.7 & 2438.1 & 215.3 \\
        \cline{1-21}
        \multirow[t]{5}{*}{100} & stGreedy & 4.0 & 3.4 & 2.2 & 11.3 & 12.3 & 88.9 & 32.5 & 28.3 & 50.2 & 98.6 & 186.9 & 319.9 & 319.2 & 1308.6 & 3792.7 & 4032.7 & 11501.8 & 9702.1 & 6107.8 \\
         & simplStoch & 4.2 & 3.7 & 2.4 & 12.6 & 12.7 & 98.3 & 32.6 & 27.6 & 64.8 & 94.1 & 331.5 & 543.0 & 396.9 & 1396.6 & 3807.0 & 3930.4 & 12325.5 & 9452.5 & 6240.4 \\
         & colStoch & 12.8 & 8.3 & 5.8 & 15.4 & 13.6 & 888.1 & 37.7 & 26.6 & 31.7 & 84.1 & 105.4 & 218.3 & 234.6 & 449.1 & 2025.3 & 1392.7 & 8118.6 & 7152.6 & 1608.5 \\
         & simplStochJLT & 142.5 & 407.1 & 27.4 & 462.5 & 265.9 & 1788.5 & 216.3 & 121.4 & 68.7 & 625.9 & 136.2 & 804.2 & 178.5 & 749.5 & 2326.4 & NaN & 4284.6 & 4676.4 & 400.1 \\
         & colStochJLT & 445.3 & 108.2 & 26.3 & 456.0 & 101.8 & 3494.2 & 227.9 & 124.3 & 72.9 & 628.0 & 114.5 & 507.0 & 153.2 & 748.0 & 2367.1 & 1394.7 & 4455.4 & 4261.6 & 416.9 \\
        \cline{1-21}
        \bottomrule
        \end{tabular}      
}
\end{table}

\end{landscape}

\fi

\end{document}